\documentclass[
 aps,
 prd,
 twocolumn,
 superscriptaddress,
 nofootinbib,
 floatfix,
 showkeys,
 preprintnumbers
]{revtex4-2}

\usepackage[T1]{fontenc}
\usepackage{lmodern}
\usepackage{amsmath,amssymb,mathtools,bm}
\usepackage{booktabs}
\usepackage{microtype}
\usepackage{xcolor}
\usepackage{needspace}
\usepackage{etoolbox}
\usepackage{tikz}
\usetikzlibrary{decorations.pathmorphing,decorations.markings}
\usepackage{hyperref}
\usepackage{cleveref}
\usepackage{bbm}
\usepackage{comment}

\definecolor{BlueViolet}{rgb}{0.2, 0.00, 0.7}
\definecolor{Blue}{rgb}{0.15, 0.00, 0.9}
\definecolor{light-blue}{rgb}{0.3, 0.3, 1}
\definecolor{kit-green}{rgb}{0, 
0.58823 %150/255
, 0.50980 %130/255
}
\hypersetup{
  colorlinks=true,
  linkcolor=light-blue,
  citecolor=light-blue,
  urlcolor=kit-green,
  pdftitle={On-Shell Amplitudes and Black-Hole Perturbations: Exact Reissner--Nordstrom Mixing}
}

\newcommand{\dd}{\mathrm d}
\newcommand{\ii}{i}

\newcommand{\TF}{\mathrm{TF}}
\newcommand{\Order}{\mathcal O}
\newcommand{\Span}{\operatorname{span}}
\newcommand{\Tr}{\operatorname{tr}}
\newcommand{\comm}[2]{\left[#1,#2\right]}
\newcommand{\M}{\mathcal M}
\newcommand{\A}{\mathbf A}
\newcommand{\K}{\mathbf K}

\newcommand{\Pj}{\mathbf \Pi}
\newcommand{\slashed}[1]{\not{\!#1}}
\newcommand{\e}{\varepsilon}
\newcommand{\sgn}{\operatorname{sgn}}

\newcommand{\Jmat}{\begin{pmatrix}0&1\\-1&0\end{pmatrix}}
\newcommand{\cth}{c_\theta}
\newcommand{\sth}{s_\theta}
\DeclareMathOperator{\diag}{diag}

\newcommand{\BreakBibliographyBefore}[1]{%
  \let\RNoriginalbibitem\bibitem
  \RenewDocumentCommand{\bibitem}{o m}{%
    \ifstrequal{##2}{#1}{\newpage}{}%
    \IfNoValueTF{##1}
      {\RNoriginalbibitem{##2}}
      {\RNoriginalbibitem[##1]{##2}}%
  }%
}

\makeatletter
\@ifundefined{mathbbm}{}{}
\makeatother

\newtheorem{theorem}{Theorem}
\newtheorem{corollary}{Corollary}
\newtheorem{proposition}{Proposition}

\begin{document}
\preprint{CHIBA-EP-281}

\title{On-Shell Amplitudes and Black-Hole Perturbations: Exact Reissner--Nordstr{\"o}m Mixing}

\author{Kento Takahara}
\email{takahara.phys@gmail.com}
\affiliation{Center for Frontier Science, Faculty of Science, Chiba University}
\affiliation{Department of Physics, Graduate School of Science,
Chiba University, Chiba 263-8522, Japan}

\author{Teppei Kitahara}
\email{kitahara@chiba-u.jp}
\affiliation{Department of Physics, Graduate School of Science,
Chiba University, Chiba 263-8522, Japan}
\affiliation{
  Kobayashi-Maskawa Institute for the Origin of Particles and the
  Universe, Nagoya University,
  Furo-cho Chikusa-ku, Nagoya 464-8602 Japan
}

%\date{\today}

\begin{abstract}
Can flat-space on-shell amplitudes determine the channel basis of a coupled black-hole perturbation problem?
We address this question for electromagnetic and gravitational perturbations of a Reissner--Nordstr\"{o}m (RN) black-hole.
We organize the minimally coupled photon-graviton tree amplitudes off a heavy charged source into a $2\times 2$ channel-space matrix and perform a parity-resolved Jacob--Wick partial-wave projection. 
For every radiative multipole $\ell \geq 2$ and in both parity sectors, we show that the trace-free fixed-source partial-wave matrix is exactly proportional to the trace-free Moncrief coupling matrix, 
and therefore selects the same constant spectral projectors.
Through a first-Born matching, 
the amplitudes determine the same eigenspaces in the leading $r^{-3}$ weak-field potential, but not the complete radial potentials.
Using the exact classical RN potentials as independent curved-background input, 
we show that these projectors persist throughout the full radial domain. 
We also explicitly retain finite-mass effects through $\mathcal{O}(\omega/m)$, 
finding a nonvanishing commutator with the Moncrief coupling matrix, which shows that the RN-projector alignment is spoiled by genuine two-body recoil effects.
As a first step toward rotation,
we further extract the representation-independent linear-spin term from a minimally coupled Dirac amplitude.
We find that the complete tree-level channel matrix factorizes with a single linear-spin dressing, while the formal $J=2$ block fails to preserve the unchanged RN projectors. 
This restricted result does not constitute a test of Kerr--Newman separability, but it indicates that a rotating generalization must account for spin-induced angular-mode mixing.
We expect that this on-shell method can be extended to more general long-range scattering systems with two asymptotic channels.
\end{abstract}

\maketitle

\section{Introduction}
\label{sec:introduction}

The linearized Einstein--Maxwell system on a charged black-hole background is an example of a coupled wave system. 
In the spherically symmetric Reissner--Nordstr\"{o}m (RN) spacetime, however, gauge-invariant electromagnetic and gravitational perturbation variables can be transformed by {\it an \(r\)-independent linear rotation} into two master variables satisfying decoupled wave equations
\cite{Zerilli:1974ai,Moncrief:1974am,Moncrief:1974ng,Moncrief:1975sb, Chandrasekhar:1979iz,Chandrasekhar:1979rm}.
This transformation depends on the black-hole’s mass and the electric charge, as well as on the angular-momentum, and has traditionally been derived from the perturbation equations on the curved background.

Here we approach the problem from a different perspective and ask: to what extent is this decoupling structure already encoded in flat-space scattering amplitudes?
This question is useful for two reasons.
First, tree-level amplitudes isolate the leading long-range response without requiring a prior choice of gauge or master variables.
Second, they recast decoupling not as the problem of finding a suitable differential transformation, but as an algebraic condition on a matrix 
acting in the two-dimensional channel space 
spanned by the photon $(\gamma)$ and graviton $(g)$ states.

Extending the analysis to include rotation presents a potentially fundamental challenge. 
In a generic Kerr--Newman (KN) spacetime, 
the gravitoelectromagnetic perturbation system remains coupled, and no generally available pair of fully separated scalar radial equations serves as a counterpart to the RN master system.
Controlled descriptions are nevertheless available through slow-rotation expansions, coupled-mode calculations, and gauge-invariant formulations \cite{Pani:2013hpa,Pani:2013ijaDetailed,Dias:2015wqa,Giorgi:2020xvf,He:2023leo}. 
Meanwhile, the on-shell methods have revealed the remarkably compact spin dependence of black-hole amplitudes \cite{ArkaniHamed:2019ymq,Moynihan:2019bor,Chung:2019duq,Bautista:2021wfy,Bautista:2022wjf,Cangemi:2023ysz}, 
and spin universality structures in black-hole amplitudes \cite{Akpinar:2026oni}, including graviton-photon conversion in a charged system through quadratic order in spin \cite{Zheng:2026knmix}.

In this work, we separate the amplitude-side and curved-background parts of the argument. 
The fixed-source flat-space on-shell amplitudes determine the trace-free partial-wave operator and, through first Born matching, the channel eigenspaces of the leading $r^{-3}$ potential tail. 
Whether these eigenspaces persist in the complete radial problem is a separate question, which we answer using the exact classical RN perturbation potentials as independent input.

Our first result provides an operator-level criterion and establishes an exact result for the RN system.
Let
 \(\Lambda \equiv \ell (\ell +1 )- 2 = (\ell-1) (\ell +2 )\),
 and let \(P=\pm1\) denote polar $(P=+1)$ and axial parity $(P=-1)$. 
The RN perturbation equations, in a canonically normalized Moncrief channel basis, contain the  Moncrief coupling matrix 
\cite{Moncrief:1974am,Moncrief:1974ng,Moncrief:1975sb,Chandrasekhar:1979iz,Chandrasekhar:1979rm}:
\begin{align}
\begin{aligned}
 \K_\ell^P &=
 \begin{pmatrix}
 0&2P Q\sqrt{\Lambda}\\
 2P Q\sqrt{\Lambda}&6M
 \end{pmatrix}\,,
 \\
 \bigl(\K_\ell^P\bigr)^{\mathrm{TF}} &\equiv \K_\ell^P-3M \mathbf{1}_2 \,.
 \label{eq:intro-KC}
\end{aligned} 
\end{align}
Here, $\ell$ denotes the spherical-harmonic multipole number of the RN perturbations.
The second line defines the trace-free part \cite{Moncrief:1974am}, which plays a central role in this work.

We show that the trace-free part of the zeroth-order coefficient in the recoil expansion, reconstructed from flat-space amplitudes, is proportional to \((\K_\ell^P)^{\mathrm{TF}}\) in Eq.~\eqref{eq:intro-KC}. 
Since the trace part is proportional to the identity,
the amplitude-derived partial-wave matrix therefore 
has the same eigenspaces as
the RN  Moncrief coupling matrix, 
and its mixing angle agrees exactly with the Moncrief angle (up to sign conventions and interchange of the eigenvectors). 
In particular, we show 
\begin{equation}
 \comm{\A_{\ell,0}^P}{\K_\ell^P}=0 \,.
 \label{eq:intro-commutator}
\end{equation}
This statement concerns the fixed-source partial wave and its leading Born image. 
The all-radius diagonalization of the complete RN classical potentials by the same projectors is established independently in Sec.~\ref{sec:radial} using the exact curved-background equations.

The division of labor between the amplitude and curved-background analyses is important. 
Through the Born matching, the amplitude fixes the channel direction of the \(r^{-3}\) interaction, whereas the persistence of this direction in the exact classical RN potentials is established independently from the curved-background algebra. 
Consequently, the flat-space tree-level partial wave, the weak-field \(r^{-3}\) coefficient, and the exact curved-background potential all select the same pair of channel projectors.

This result applies to \(\ell\geq2\), where both radiative species are present. 
The \(\ell=0,1\) sectors are constrained or effectively single-channel sectors; their exceptional behavior does not represent a failure of decoupling.

To investigate separability on a KN background, our second result extracts the universal spin-dipole coefficient from a minimally coupled Dirac source.
On the aligned kinematic slice considered here, the explicit \(J=2\) block presented in Appendix~\ref{app:aligned} does not commute with the RN coupling matrices.
The RN projectors therefore cannot be applied to the rotating system without modification. 
This restricted diagnostic does not, however, constitute a test of separability in a generic KN spacetime, for the reasons explained in Section~\ref{sec:spin}.

The remainder of this paper is organized as follows. 
In Sec.~\ref{sec:rn-tree}, to formulate an amplitude-side description of photon and graviton Compton scattering by a RN black-hole, 
we use a heavy charged scalar as a spinless source matched to the RN spacetime parameters. 
We construct the complete minimally coupled tree-level amplitude matrix in two-channel space, 
encompassing elastic photon and graviton scattering as well as photon-graviton conversion processes. 
We then organize its heavy-source expansion into the fixed-source contribution and the leading recoil correction, 
and perform a parity-resolved Jacob--Wick partial-wave projection. 
We show that the trace-free part of the fixed-source partial-wave matrix selects precisely the Moncrief projectors
and determine the leading recoil-induced departure from this alignment.
In Sec.~\ref{sec:radial}, through first Born matching, we relate the amplitude result to the trace-free \(r^{-3}\) coefficient of the weak-field radial potential. 
Taking the exact classical RN perturbation potentials as independent curved-background input, 
we then prove that the same constant projectors identified from the amplitudes diagonalize the complete RN radial system throughout the full radial domain.
In Sec.~\ref{sec:spin}, as a first step toward extending the RN analysis to a rotating system, we replace the heavy scalar source with a minimally coupled Dirac field and compute the corresponding tree-level amplitudes, thereby extracting the universal \(g=2\) coefficient linear in spin. 
We further establish the factorization of the complete tree-level amplitudes on an aligned kinematic slice and analyze the formal \(J=2\) block. 
We then clarify the limitations of this restricted diagnostic as a test of separability in a generic KN spacetime.
Section~\ref{sec:discussion} summarizes our results and discusses possible future extensions. Appendices~\ref{app:spinor-scalar}--\ref{app:aligned} present, respectively, 
the spinor-helicity construction of the scalar tree-level amplitude matrix, details of the partial-wave projection and the radial Born matching, the Dirac reduction and proof of full-tree factorization, and the aligned \(J=2\) data together with the limitations of their interpretation.

\section{Flat-space amplitudes and the RN projector}
\label{sec:rn-tree}

\subsection{Minimally coupled effective Lagrangian}

To connect the flat-space on-shell amplitude calculation with the RN perturbation problem,
we represent the black-hole source by the simplest massive  state carrying the same asymptotic monopole data, namely its mass and electric charge. 
Since the RN black-hole is non-rotating,
we use a heavy charged scalar, $\Phi$, 
as the minimal amplitude-side representative of its asymptotic mass and charge.

Accordingly, we consider Einstein--Maxwell theory minimally coupled to a massive complex scalar field \(\Phi\) of electric charge \(q_\Phi\), namely scalar QED (SQED) coupled to dynamical gravity\footnote{%
Throughout this paper, we use the mostly-minus metric convention.}:
\begin{align}
\begin{aligned}
 S_{\mathrm{EM}\Phi}&=\int\dd^4x\,\sqrt{-g}\,\mathcal L_{\mathrm{EM}\Phi}\,,\\
 \mathcal L_{\mathrm{EM}\Phi}&=\frac{2}{\kappa^2}R-\frac14F_{\mu\nu}F^{\mu\nu}
 \\%[-1mm]
 &\quad+g^{\mu\nu}(D_\mu\Phi)^*D_\nu\Phi-m_\Phi^2|\Phi|^2\,,
 \label{eq:action}
 \end{aligned}
\end{align}
with $\kappa^2 = 32\pi G = 32\pi/M_{\mathrm{Pl}}^2$ and 
 \(D_\mu=\partial_\mu+\ii q_\Phi A_\mu\).
 The total electric charge of the RN black-hole is identified as $q_\Phi$.
 We expand the exterior metric around flat spacetime as
 \(g_{\mu\nu}=\eta_{\mu\nu}+\kappa h_{\mu\nu}\), where $h_{\mu\nu}$ is the canonically normalized graviton field.

The \(2\to2\) processes are
\begin{align}
\begin{aligned}
& \Phi_q(p)+I (k,h_i)\to 
 \Phi_q(p')+J (k',h_f)\quad  \textrm{with~} I,J \in\{\gamma,g\}.
 \label{eq:physical-process}
 \end{aligned}
\end{align}
Thus, the same calculation describes elastic photon and graviton scattering,
as well as the two conversion processes \(\gamma\leftrightarrow g\).

After matching to the heavy-source limit, \(\Phi_q\) represents a spinless source with RN parameters \((M,Q)\).
More precisely, 
only the exterior source data are matched. 
The scalar energy and current reproduce the linearized Einstein--Maxwell fields with the corresponding ADM mass and charge. 
No microscopic identification of the scalar state with a black-hole state is required.
This construction is the spin-zero limit of the standard minimal-coupling reconstruction of the long-range KN fields
\cite{Moynihan:2019bor,Chung:2019duq}.

On the other hand, 
for the RN black-hole background,
the spacetime metric takes the form \cite{Reissner:1916cle,Weyl:1917rtf,Nordstrom:1918cle}
\begin{align}
 \begin{aligned}
ds^2 &= f(r) d t^2 - \frac{d r^2}{f(r)} - r^2 d \Omega^2\,,\\
f(r) & \equiv 1 - \frac{2 M}{r} + \frac{Q^2}{r^2}\,,
\label{eq:metric}
 \end{aligned}
\end{align}
where $f(r)$ is the RN metric function in the mostly-minus metric convention.
Here, $M$ and $Q$ characterize the length scales associated with the mass and electric charge of the RN black-hole, respectively.
At tree level, these RN black-hole parameters can be identified with the scalar mass and charge as
\begin{equation}
 M=Gm_\Phi\,,
 \quad Q=\sqrt{\frac{G}{4\pi}}\,q_\Phi \,.
 \label{eq:MQ}
\end{equation}

Recoil effects are instead organized as an expansion in \(\omega/m_\Phi\), while
the factorized amplitude below is kept exact in \(m_\Phi\) until the wave
normalization is imposed. 

\subsection{Tree-level on-shell amplitudes}

\begin{figure}[!t]
\centering
\begin{tikzpicture}[
 x=0.92cm,y=0.82cm,
scalar/.style={line width=1.2pt},
scalararrow/.style={
  line width=1.2pt,
  postaction={decorate},
  decoration={markings,mark=at position .5 with {\arrow{>}}}},
 photon/.style={decorate,decoration={snake,amplitude=.45mm,
  segment length=1.8mm},line width=.55pt},
 graviton/.style={
  decorate,
  decoration={
    snake,
    amplitude=.48mm,
    segment length=1.8mm
  },
  double=white,
  double distance=.75pt,
  line width=.42pt
},
 vertex/.style={circle,fill,inner sep=1.15pt},
 every node/.style={font=\scriptsize}
]
% s channel
\draw[scalar] (0,0)--(1.1,0)--(2.2,0)--(3.2,0);
\draw[scalararrow] (0.42,0)--(0.82,0);
\node at (-.15,.0) {\(p\)};
\node at (3.4,.05) {\(p'\)};
\draw[photon] (.25,1.15)--(1.1,0);
\draw[photon] (2.2,0)--(2.95,1.15);
\node[vertex] at (1.1,0) {};
\node[vertex] at (2.2,0) {};
\node at (.18,1.38) {\(k\)};
\node at (3.02,1.38) {\(k'\)};
\node at (1.65,-.52) {(a) \(s\)-channel};
% u channel
\draw[scalar] (4.0,0)--(5.0,0)--(6.1,0)--(7.2,0);
\draw[scalararrow] (4.30,0)--(4.72,0);
\node at (3.85,.0) {\(p\)};
\node at (7.4,.05) {\(p'\)};
\draw[photon] (4.28,1.15)--(6.1,0);
\draw[photon] (5.0,0)--(6.82,1.15);
\node[vertex] at (5.0,0) {};
\node[vertex] at (6.1,0) {};
\node at (4.20,1.38) {\(k\)};
\node at (6.90,1.38) {\(k'\)};
\node at (5.55,-.52) {(b) \(u\)-channel};
% second row
\begin{scope}[yshift=-0.5cm]
% scalar-QED contact
\draw[scalar] (0,-2.75)--(1.65,-2.75)--(3.2,-2.75);
\draw[scalararrow] (0.58,-2.75)--(1.03,-2.75);
\node at (-.15,-2.75) {\(p\)};
\node at (3.4,-2.70) {\(p'\)};
\draw[photon] (.55,-.85)--(1.65,-2.75);
\draw[photon] (1.65,-2.75)--(2.72,-.85);
\node[vertex] at (1.65,-2.75) {};
\node at (.38,-.66) {\(k\)};
\node at (2.90,-.66) {\(k'\)};
\node at (1.65,-3.28) {(c) contact};
% graviton pole
\draw[scalar] (4.0,-2.75)--(5.6,-2.75)--(7.2,-2.75);
\draw[scalararrow] (4.55,-2.75)--(5.00,-2.75);
\node at (3.85,-2.75) {\(p\)};
\node at (7.4,-2.70) {\(p'\)};
\draw[graviton] (5.6,-2.75)--(5.6,-1.65);
\draw[photon] (4.45,-.80)--(5.6,-1.65);
\draw[photon] (5.6,-1.65)--(6.75,-.80);
\node[vertex] at (5.6,-2.75) {};
\node[vertex] at (5.6,-1.65) {};
\node at (4.25,-.62) {\(k\)};
\node at (6.95,-.62) {\(k'\)};
\node at (5.55,-3.28) {(d) \(g\)-pole};
\end{scope}
\end{tikzpicture}
\caption{Tree-level contributions to
\(\Phi_q+\gamma\to\Phi_q+\gamma\).
The thick solid, wavy, and double-wavy lines denote the heavy charged scalar source matched to the RN parameters,
photon, and graviton, respectively.  The diagrams (a)--(c)
give the SQED contribution \(q_\Phi^2\mathcal C_0\), 
while the diagram (d)
gives the graviton-pole contribution
\(\M_{\gamma\gamma}^{(g{\rm -pole})}\).}
\label{fig:gamma-tree}
\end{figure}

We use physical \(2\to2\) kinematics of Eq.~\eqref{eq:physical-process} throughout this section.
We label the physical external legs as
$1(p)+2(k)\to3(p')+4(k')$,
with \(p+k=p'+k'\), \(p^2=p'^2=m_\Phi^2\), and \(k^2=k'^2=0\). The signs
\(h_i,h_f=\pm1\) label the two circular polarizations (right-handed circular polarization is the positive helicity).

For \(p+k=p'+k'\), define
\begin{align}
 D_s&=2p\cdot k,&
 D_u&=-2p\cdot k',\notag\\
 t&=(k-k')^2,&
 \Xi_\Phi&=m_\Phi^4-su=m_\Phi^2t-D_sD_u .
 \label{eq:invariants}
\end{align}
Momentum conservation also gives \(D_s+D_u+t=0\).
Assigning spinors directly to the physical null momenta \(k\) and \(k'\), 
\begin{equation}
 \begin{aligned}
 D_s&=\langle k|\boldsymbol{p}|k]\,,&
 D_u&=-\langle k'|\boldsymbol{p}|k']\,,\\
 t&%=-\langle kk'\rangle[k'k]
   =\langle k'k\rangle[k'k],&
 \Xi_\Phi&=\langle k'|\boldsymbol{p}|k]\,[k'|\bar{\boldsymbol{p}}|k\rangle \,.
 \end{aligned}
 \label{eq:invariants-spinor}
\end{equation}

Evaluating the SQED diagrams in Figs.~\ref{fig:gamma-tree}(a)--\ref{fig:gamma-tree}(c), we obtain the
gauge-invariant SQED Compton amplitude\footnote{%
Note that the superscripts on
amplitudes are ordered as \((h_f,h_i)\).}
\begin{align}
\mathcal{M}_{\gamma \gamma}^{\mathrm{SQED}} & \equiv q_{\Phi}^2 \mathcal{C}_0\,,\\
 \mathcal C_0^{h_f h_i}
 &=2\Biggl[
 \frac{(\e_i^{h_i}\cdot p)(\e_f^{\ast h_f}\cdot p')}{p\cdot k}
 -\frac{(\e_i^{h_i}\cdot p')(\e_f^{\ast h_f}\cdot p)}{p\cdot k'}\notag \\
 & 
 \quad -\e_i^{h_i}\cdot\e_f^{\ast h_f}\Biggr]\,.
 \label{eq:C0-covariant}
\end{align}
The first two terms contain the scalar-exchange poles, whereas the last is the seagull contact term. Projecting Eq.~(\ref{eq:C0-covariant}) onto the physical helicity states, we find
\begin{equation}
 \mathcal C_0^{++}=\frac{2\Xi_\Phi}{D_sD_u}\,,
 \quad
 \mathcal C_0^{+-}=-\frac{2m^2t}{D_sD_u}\,.
 \label{eq:C0-helicity}
\end{equation}
The remaining helicity components follow from parity. This result is
consistent with the scalar Compton amplitude of
Ref.~\cite{Ahmadiniaz:2019ppj}.
In spinor-helicity variables, these read
\begin{equation}
 \mathcal C_0^{++}
 =\frac{2\langle k'|\boldsymbol{p}|k]^2}{D_sD_u}\,,
 \quad
 \mathcal C_0^{+-}
 =\frac{2m^2\langle kk'\rangle^2}{D_sD_u}\,.
 \label{eq:C0-helicity-spinor}
\end{equation}
The remaining two components follow from parity.

Besides the pure SQED contributions, there is the graviton mediated $t$-channel diagram, see Fig.~\ref{fig:gamma-tree}(d).
We separately obtain
\begin{equation}
 \begin{aligned}
 \left(\M_{\gamma\gamma}^{(g{\rm -pole})}\right)^{++}
 &=-\frac{\kappa^2}{4t}
   \langle k'|\boldsymbol{p}|k]^2\,,
 &
 \left(\M_{\gamma\gamma}^{(g{\rm -pole})}\right)^{+-}
 &=0\,,
 \\[1mm]
 \left(\M_{\gamma\gamma}^{(g{\rm -pole})}\right)^{--}
 &=-\frac{\kappa^2}{4t}
   [k'|\bar{\boldsymbol{p}}|k\rangle^2\,,
 &
 \left(\M_{\gamma\gamma}^{(g{\rm -pole})}\right)^{-+}
 &=0\,.
 \end{aligned}
 \label{eq:gamma-g-pole}
\end{equation}
These expressions are consistent with the graviton-factorization results of
Refs.~\cite{Choi:1994ax,Holstein:2006bh,BjerrumBohr:2014lea,
Ahmadiniaz:2019ppj}.
\begin{figure}[!t]
\centering
\begin{tikzpicture}[
 x=0.92cm,y=0.82cm,
 scalar/.style={line width=1.0pt},
 scalararrow/.style={
  line width=1.0pt,
  postaction={decorate},
  decoration={markings,mark=at position .5 with {\arrow{>}}}},
 photon/.style={decorate,decoration={snake,amplitude=.45mm,
  segment length=1.8mm},line width=.55pt},
 graviton/.style={
  decorate,
  decoration={
    snake,
    amplitude=.48mm,
    segment length=1.8mm
  },
  double=white,
  double distance=.75pt,
  line width=.42pt
 },
 vertex/.style={circle,fill,inner sep=1.15pt},
 every node/.style={font=\scriptsize}
]

% s channel
\draw[scalar] (0,0)--(1.1,0)--(2.2,0)--(3.2,0);
\draw[scalararrow] (0.42,0)--(0.82,0);
\node at (-.15,.0) {\(p\)};
\node at (3.4,.05) {\(p'\)};
\draw[graviton] (.25,1.15)--(1.1,0);
\draw[photon]   (2.2,0)--(2.95,1.15);
\node[vertex] at (1.1,0) {};
\node[vertex] at (2.2,0) {};
\node at (.18,1.38) {\(k\)};
\node at (3.02,1.38) {\(k'\)};
\node at (1.65,-.52) {(a) \(s\)-channel};

% u channel
\draw[scalar] (4.0,0)--(5.0,0)--(6.1,0)--(7.2,0);
\draw[scalararrow] (4.30,0)--(4.72,0);
\node at (3.85,.0) {\(p\)};
\node at (7.4,.05) {\(p'\)};
\draw[graviton] (4.28,1.15)--(6.1,0);
\draw[photon]   (5.0,0)--(6.82,1.15);
\node[vertex] at (5.0,0) {};
\node[vertex] at (6.1,0) {};
\node at (4.20,1.38) {\(k\)};
\node at (6.90,1.38) {\(k'\)};
\node at (5.55,-.52) {(b) \(u\)-channel};

% contact
\begin{scope}[yshift=-0.5cm] 
\draw[scalar] (0,-2.75)--(1.65,-2.75)--(3.2,-2.75);
\draw[scalararrow] (0.58,-2.75)--(1.03,-2.75);
\node at (-.15,-2.75) {\(p\)};
\node at (3.4,-2.70) {\(p'\)};
\draw[graviton] (.55,-.85)--(1.65,-2.75);
\draw[photon]   (1.65,-2.75)--(2.72,-.85);
\node[vertex] at (1.65,-2.75) {};
\node at (.38,-.66) {\(k\)};
\node at (2.90,-.66) {\(k'\)};
\node at (1.65,-3.28) {(c) contact};

% photon pole
\draw[scalar] (4.0,-2.75)--(5.6,-2.75)--(7.2,-2.75);
\draw[scalararrow] (4.55,-2.75)--(5.00,-2.75);
\node at (3.85,-2.75) {\(p\)};
\node at (7.4,-2.70) {\(p'\)};
\draw[photon]   (5.6,-2.75)--(5.6,-1.65);
\draw[graviton] (4.45,-.80)--(5.6,-1.65);
\draw[photon]   (5.6,-1.65)--(6.75,-.80);
\node[vertex] at (5.6,-2.75) {};
\node[vertex] at (5.6,-1.65) {};
\node at (4.25,-.62) {\(k\)};
\node at (6.95,-.62) {\(k'\)};
\node at (5.55,-3.28) {(d) \(\gamma\)-pole};
\end{scope}
\end{tikzpicture}
\caption{Tree-level contributions to
\(\Phi_q+g \to\Phi_q+\gamma\).
}
\label{fig:graviton-gamma}
\end{figure}

We now extend the Compton-scattering calculation above to include external gravitons. 
Besides graviton Compton scattering, 
we consider the two conversion processes:
\(\Phi_q+g \to\Phi_q+\gamma\) and \(\Phi_q+\gamma \to\Phi_q+ g \).
The tree diagrams for the former are shown in Fig.~\ref{fig:graviton-gamma}.
The diagrams for \(\Phi_q+\gamma \to\Phi_q+ g \) are obtained from them by interchanging the initial and final massless particles.
The rows below label the final species and the columns the initial species, in the order \((\gamma,g)\).

The resulting $2\times 2$ matrix amplitude is 
\begin{equation}
 \begin{aligned}
 &{\mathbf M}^{h_f h_i}
 \equiv
 \begin{pmatrix}
 \M_{\gamma\to\gamma}^{h_f h_i} & \M_{g\to \gamma}^{h_f h_i}\\
 \M_{\gamma\to g}^{h_f h_i}     & \M_{g\to g}^{h_f h_i}
 \end{pmatrix}
 \\[2mm]
 &=
  \begin{pmatrix}
 q_\Phi^2 \mathcal C_0^{h_fh_i}
 +\left(\M_{\gamma\gamma}^{(g{\rm -pole})}\right)^{h_fh_i}
 &
 \dfrac{\kappa q_\Phi}{2}N_i^{h_i}\mathcal C_0^{h_fh_i}
 \\[2mm]
 \dfrac{\kappa q_\Phi}{2}N_f^{h_f}\mathcal C_0^{h_fh_i}
 &
 \dfrac{\kappa^2}{8}R
 \left(\mathcal C_0^{h_fh_i}\right)^2
 \end{pmatrix}\,,
 \end{aligned}
 \label{eq:tree-matrix}
\end{equation}
with\footnote{%
Here, we define the stripped field-strength tensors by
\((F_i)^{\mu\nu}=k^\mu\e_i^\nu-k^\nu\e_i^\mu\) and
\((F_f^*)^{\mu\nu}
=k'^\mu\e_f^{*\nu}-k'^\nu\e_f^{*\mu}\).} 
\begin{align}
&N_i^{h_i}\equiv \frac{p\cdot F_i^{h_i}\cdot p'}{k\cdot k'}\,,\quad
N_f^{h_f} \equiv \frac{p'\cdot F_f^{\ast h_f}\cdot p}{k\cdot k'}\,,
\end{align}
and
\begin{align}
R\equiv\frac{D_sD_u}{2t}\,.
 \label{eq:N-R}
\end{align}
Note that although the graviton helicity is $\pm 2$, $h_{g}=\pm 1$ is taken in this expression, 
since the graviton polarization tensor can be written as \(\e_{h = \pm2} ^{\mu\nu}=\e_{h = \pm1}^\mu\e_{h = \pm1}^\nu\).
The corresponding spinor bracket expressions are
\begin{equation}
 \begin{aligned}
 N_i^+
 &=\frac{\langle k'|\boldsymbol{p}|k]}
 {\sqrt2\langle k'k\rangle}\,,
 &
 N_i^-
 &=\frac{[k'|\bar{\boldsymbol{p}}|k\rangle}
 {\sqrt2[k'k]}\,,
 \\[1mm]
 N_f^+
 &=-\frac{\langle k'|\boldsymbol{p}|k]}
 {\sqrt2[k'k]}\,,
 &
 N_f^-
 &=-\frac{[k'|\bar{\boldsymbol{p}}|k\rangle}
 {\sqrt2\langle k'k\rangle}\,,
 \\[1mm]
 R
 &=-\frac{
 \langle k|\boldsymbol{p}|k]\,
 \langle k'|\boldsymbol{p}|k']}
 {2\langle k'k\rangle[k'k]} \,.
 \end{aligned}
 \label{eq:N-R-spinor}
\end{equation}
This derivation is presented in
Appendix~\ref{app:spinor-scalar}. 
Equation~\eqref{eq:tree-matrix} is the familiar
factorization of gravitational Compton and photoproduction amplitudes
\cite{Choi:1994ax,Holstein:2006bh,BjerrumBohr:2014lea,
Ahmadiniaz:2019ppj}.
Note that the $\gamma$-pole ($g$-pole) contributions are already included in $\mathcal{M}_{g\to \gamma}$ and $\mathcal{M}_{\gamma \to g}$ ($\mathcal{M}_{g \to g}$) in the matrix \eqref{eq:tree-matrix} to ensure the correct  factorization.

\subsection{Recoil expansion}

We use the Jacob--Wick expansion \cite{Jacob:1959at} to decompose the
helicity amplitudes into partial waves.
Since the Jacob--Wick expansion is a two-body center-of-mass partial-wave expansion,
we work in the center-of-mass frame and parameterize the kinematics by the
massless-particle energy \(\omega\) and the scattering angle \(\theta\).

Let us define
\begin{equation}
 E=\sqrt{m_\Phi^2+\omega^2}\,,\quad
 E_{\rm cm}=E+\omega=\sqrt{s}\,,
\end{equation}
with
$p^\mu = (E, 0,0,-\omega)$ and $k^\mu = (\omega ,0,0,\omega)$, 
and the small recoil parameter 
\begin{equation}
 \epsilon_{\rm r}\equiv\frac{\omega}{m_\Phi }\,.
\end{equation}

Since all channels considered here contain the same massive particle and scattered  massless particles,
the magnitudes of the final-state three-momenta are equal to those of the initial-state three-momenta in the center-of-mass frame, even when $\mathcal{M}_{g\to \gamma}$ and $\mathcal{M}_{\gamma \to g}$.
This setup considerably simplifies the differential cross section, which becomes
\begin{align}
\begin{aligned}
    \frac{\dd\sigma}{\dd\Omega}
 & =\frac{1}{128\pi^2 E_{\rm cm} ^2} \sum_{h_i, h_f} 
 \left|\M_{I \to J}^{h_fh_i}\right|^2\\
 & =\frac{1}{2} \sum_{h_i, h_f} \left|f_{JI}^{h_f h_i}\right|^2\,,
 \end{aligned}
 \end{align}
with the rescaled helicity amplitudes 
\begin{align}
 f_{JI}^{h_f h_i} & \equiv \frac{\M_{I \to J}^{h_fh_i}}{8\pi E_{\rm cm}}\,.
\end{align}

We now substitute the matching conditions in Eq.~\eqref{eq:MQ}
\begin{equation}
 m_\Phi =\frac{M}{G}\,,\quad
 q_\Phi =\sqrt{\frac{4\pi}{G}}\,Q\,,\quad
 \kappa=\sqrt{32\pi G}\,,
\end{equation}
into the on-shell amplitudes, 
and expand in
\(\epsilon_{\rm r}\).
With \(\cth\equiv \cos(\theta/2)\),
\(\sth\equiv \sin(\theta/2)\), 
we obtain the following rescaled helicity amplitudes
\begin{align}
 \label{eq:rn-wave}
 f_{\gamma\gamma}^{++}
 &=M\frac{\cth^2}{\sth^2}
   \left(1+\epsilon_{\rm r}\right)-\frac{Q^2}{M}\cth^2
   \left[1-(1 - 2 s_\theta^2) \epsilon_{\rm r}\right] \notag\\
   &
   \quad +\Order(\epsilon_{\rm r}^2)\,,\\
 f_{\gamma\gamma}^{+-}&=-\frac{Q^2}{M}\sth^2
   \left[1-(3 - 2 s_\theta^2)\epsilon_{\rm r}\right]
   +\Order(\epsilon_{\rm r}^2)\,,\\
 f_{gg}^{++}
 &=M\frac{\cth^4}{\sth^2}
   \left[1+(1 + 2 s_\theta^2)\epsilon_{\rm r}\right]
   +\Order(\epsilon_{\rm r}^2)\,,\\
 f_{gg}^{+-}
 &=M\sth^2\left[1-(3 - 2 s_\theta^2)\epsilon_{\rm r}\right]
   +\Order(\epsilon_{\rm r}^2)\,,
   \label{eq:rn-wave-diagonal}\\
 f_{\gamma g}^{++}
 &=Q\frac{\cth^3}{\sth}
   \left(1+ 2 s_\theta^2\epsilon_{\rm r}\right)
   +\Order(\epsilon_{\rm r}^2)\,,\\
 f_{\gamma g}^{+-}
 &=-Q\cth\sth
   \left(1-2 c_\theta^2\epsilon_{\rm r}\right)
   +\Order(\epsilon_{\rm r}^2)\,.
 \label{eq:rn-wave-end}
\end{align}
With our choice of relative phase between the photon and graviton channels, 
time-reversal invariance implies
\begin{align}
    f_{g \gamma}^{h_f h_i}=f_{\gamma g}^{h_i h_f}\,.
\end{align}
Equations~\eqref{eq:rn-wave}--\eqref{eq:rn-wave-end}  retain the recoil expansion through first order in $\epsilon_{\rm r}$.
Their zeroth-order terms in $\epsilon_{\rm r}^0$ are the fixed-source amplitudes.
We derive the exact expressions and their partial-wave
projections in Appendix~\ref{app:projection}.

Note that the pure SQED contributions in Eq.~\eqref{eq:C0-covariant} 
 generate the pole-free Thomson terms
\begin{align}
 \label{eq:thomson}
 f_{\rm Th.}^{++}
 &=-\frac{Q^2}{M}\cth^2
   \left[1-(1 - 2 s_\theta^2) \epsilon_{\rm r}\right]
   +\Order(\epsilon_{\rm r}^2)\,,\\
 f_{\rm Th.}^{+-}
 &=-\frac{Q^2}{M}\sth^2
   \left[1-(3-2 s_\theta^2)\epsilon_{\rm r}\right]
   +\Order(\epsilon_{\rm r}^2)\,.
 \label{eq:thomson-end}
\end{align}
This term does not contribute to the partial-wave projection relevant to the fixed-source result below.\footnote{%
At this stage, we have not yet performed the partial-wave projection.
At zeroth order in recoil,
\(\cth^2=d^1_{11}\) and \(\sth^2=d^1_{1,-1}\), so the Thomson terms
contain only \(\ell=1\) Wigner-\(d\) functions. Their projections onto
all \(\ell\geq2\) partial waves therefore vanish by Wigner-\(d\)
orthogonality. Consequently, the fixed-source result below is unchanged
whether \(f_{\rm Th.}\) is included before the projection or treated
separately. At first order in recoil, however, the Thomson terms generate
an \(\ell=2\) odd-parity contribution, which we include explicitly in
the full recoil result below.}

As an independent check of our normalization,
using our geometrized RN variable conventions, we
 sum over final-state helicities for a fixed incident helicity.
The zeroth-order recoil terms in Eqs.~\eqref{eq:rn-wave}--\eqref{eq:rn-wave-end} (except for the Thomson terms) give
\begin{equation}
 \left.\frac{\dd\sigma_{\gamma\to\gamma}}{\dd\Omega}\right|_{g{\rm -pole}}
 =M^2\cot^4\frac{\theta}{2}\,,
 \quad
 \frac{\dd\sigma_{g\to g}}{\dd\Omega}
 =\frac{M^2}{\sth^4}\left(\cth^8+\sth^8\right)\,,
 \label{eq:cross-section}
\end{equation}
and
\begin{equation}
\frac{\dd\sigma_{\gamma\leftrightarrow g}}{\dd\Omega} \equiv \frac{\dd\sigma_{\gamma \to g}}{\dd\Omega} = 
 \frac{\dd\sigma_{g \to \gamma}}{\dd\Omega}
 =Q^2\cot^2\frac{\theta}{2}\left(\cth^4+\sth^4\right)\,.
 \label{eq:mixing-cross-section}
\end{equation}
Our first two results in Eq.~\eqref{eq:cross-section} 
are consistent with the standard low-frequency
electromagnetic and gravitational Schwarzschild cross sections
\cite{Matzner:1977lowfrequency,Crispino:2009em,Bautista:2021wfy}.
Our conversion cross section in Eq.~(\ref{eq:mixing-cross-section}) agrees with the established
Coulomb-field and RN results
\cite{Matzner:1976conversion,DeLogi:1977conversion,
OuldElHadj:2022conversion}.

Using
\(\cth^4+\sth^4=(1+\cos^2\theta)/2\) and
\(Q^2=Gq_\Phi^2/(4\pi)=\alpha_\Phi G\), the conversion result can equivalently be
written as
\begin{equation}
 \frac{\dd\sigma_{\gamma\leftrightarrow g}}{\dd\Omega}
 =\frac{\alpha_\Phi  G}{2}(1+\cos^2\theta)
  \cot^2\frac{\theta}{2}\,.
 \label{eq:conversion-RS-check}
\end{equation}
This form agrees with the zero-recoil limit of the scalar-target result
of Ravndal and Sundberg \cite{Ravndal:2001nb}. Their inverse process has the same cross section, as required 
by the relation \(f_{\gamma g}=f_{g\gamma}\).

From Eqs.~\eqref{eq:thomson}--\eqref{eq:thomson-end}, we independently calculate the
helicity-reversing cross section and obtain
\begin{align}
\begin{aligned}
\left.\frac{\dd\sigma_{\gamma\to\gamma}^{\rm flip}}{\dd\Omega}
 \right|_{\rm Th.}
 &\equiv
\frac12\sum_{h_i,h_f=\pm1}
\delta_{h_f,-h_i}
\left|f_{\mathrm{Th.}}^{h_fh_i}\right|^2\\
 &=\frac{Q^4}{M^2}
 s^4_\theta\,.
 \end{aligned}
\end{align}
This result is consistent with the charge-induced low-frequency RN
backscattering term of Ref.~\cite{Crispino:2014rnbackscatter}.
That reference further identifies the contribution as arising solely
from \(\ell=1\), which explains why it does not enter the \(\ell\geq2\) projector theorem.

\subsection{Regulated Jacob--Wick projection}

Following Jacob and Wick \cite{Jacob:1959at}, we normalize the helicity
partial waves. 
%Let \(h_i,h_f=\pm1\) denote the helicity signs.
For an
initial species \(I\) and a final species \(J\) with \(I,J\in\{\gamma,g\}\), we define the physical
helicities entering the Wigner functions by
\begin{equation}
    \mu_i=s_I h_i\,,\quad
 \mu_f=s_J h_f\,,
\end{equation}
with
\begin{equation}
 s_\gamma\equiv 1\,,\quad s_g\equiv 2\,.
\end{equation}
Here, we write $f_{\mu_f\mu_i}^{JI} \equiv f_{JI}^{h_f h_i}.$
We normalize the helicity partial wave as
\begin{align}
 f_{\mu_f \mu_i}^{JI}(\theta,\phi)&=
 \frac{e^{i (\mu_i-\mu_f)\phi}}{2\ii\omega}
 \sum_\ell(2\ell+1)\notag \\
 &\quad \times
 \left(S_{\ell;\mu_f \mu_i}^{JI}-\delta^{JI}\delta_{\mu_f \mu_i}\right)
 d^\ell_{\mu_f \mu_i}(\theta)\,,
 \label{eq:JW}
\end{align}
with the inverse relation
\begin{equation}
 S_{\ell;\mu_f \mu_i}^{JI}-\delta^{JI}\delta_{\mu_f \mu_i}
 =i \omega\int_{-1}^{1}\dd z\,
 d^\ell_{\mu_f \mu_i}(\theta)f_{\mu_f \mu_i}^{JI}(z)\,,
 \label{eq:JW-inverse}
\end{equation}
where $\omega$ denotes the energy of the incident wave and $z \equiv \cos \theta $.
Here and below, the inverse relation is evaluated at \(\phi=0\). 
Note that the convention for the Wigner small-$d$ matrix function \(d^\ell_{m'm} (\theta)\) also assigns the first index to the final 
helicity and the second to the initial helicity.
We define the partial-wave $S$-matrix element by
\begin{equation}
 \mathbf S_\ell^P-\mathbf{1}_2
 \equiv i\omega\,\A_\ell^P \,.
\end{equation}
The component form of the parity projection is given in Appendix~\ref{app:projection}.

Using the intrinsic parity phases \(\eta_\gamma=-1\) and \(\eta_g=+1\), we define
the parity eigenstates
\begin{equation}
 |I;P,\ell m\rangle=
 \frac{1}{\sqrt2}
 \Bigl(|+\!s_I;\ell m\rangle+\eta_IP|-\!s_I;\ell m\rangle\Bigr)\,.
 \label{eq:parity-state}
\end{equation}
To match the standard RN master-variable convention, we rephase the
electromagnetic parity state using \(\mathbf U_P=\diag(P,1)\). This assigns the sign \(P\) to both conversion entries and brings the channel-coupling matrix into the canonical form in Eq.~\eqref{eq:intro-KC}, without changing any probability or eigenspace.

The diagonal amplitudes in
Eqs.~\eqref{eq:rn-wave}--\eqref{eq:rn-wave-end} share the same forward
logarithmic divergence. Since the channel direction is extracted from the
full coupled matrix, the two channels must be regulated consistently. We
therefore impose a single cutoff \(z\leq1-\delta\)  on the entire matrix.\footnote{%
$\delta \to 0$ corresponds to $\theta \to 0$ in the Compton scatterings.}
The divergent contribution is proportional to the identity matrix, 
and hence drops out of both the trace-free part and every channel commutator, as will be seen below.
The remaining
finite Wigner-\(d\) integrals are listed in
Appendix~\ref{app:projection}.

To distinguish the fixed-source result from finite-mass recoil effects, we
expand the partial wave as
\begin{equation}
 \A_{\ell,\rm long}^P(\epsilon_{\rm r})
 =\A_{\ell,0}^P
 +\epsilon_{\rm r}\A_{\ell,\rm r}^P
 +\Order(\epsilon_{\rm r}^2)\,.
 \label{eq:recoil-partial-wave-expansion}
\end{equation}
The first term \(\A_{\ell,0}^P\) is the fixed-source contribution relevant
to the RN background problem, whereas
\(\A_{\ell,\rm r}^P\) contains the leading finite-mass two-body correction.

\Needspace{10\baselineskip}
\begin{theorem}[Fixed-source channel projectors]
\label{thm:RN-tree}
For every radiative multipole \(\ell\geq2\), the trace-free
fixed-source partial-wave matrix obtained from
Eqs.~\eqref{eq:rn-wave}--\eqref{eq:rn-wave-end} satisfies
\begin{align}
 \label{eq:RN-tree-result}
 \bigl(\A_{\ell,0}^{+}\bigr)^{\TF}
 &=
 \frac{\Lambda+4}{(\Lambda+2)\Lambda}
 \bigl(\K_\ell^+\bigr)^{\TF}\,,
 \\
 \bigl(\A_{\ell,0}^{-}\bigr)^{\TF}
 &=
 \frac{1}{\Lambda+2}
 \bigl(\K_\ell^-\bigr)^{\TF}\,,
 \label{eq:RN-tree-result-end}
\end{align}
with \(\Lambda \equiv \ell (\ell +1 )- 2 = (\ell-1) (\ell +2 )\).
The matrix $(\K_\ell^P)^{\TF}$ is defined in Eq.~\eqref{eq:intro-KC}.
These relations imply algebraically that
\begin{equation}
 \comm{\A_{\ell,0}^P}{\K_\ell^P}=0 \,.
 \label{eq:RN-comm}
\end{equation}
Thus \(\A_{\ell,0}^P\) and \(\K_\ell^P\) have the same eigenspaces in
channel space.
\end{theorem}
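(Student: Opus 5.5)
The plan is a direct computation: evaluate the four entries of $\A^P_{\ell,0}$ from the $\epsilon_{\rm r}^0$ parts of Eqs.~\eqref{eq:rn-wave}--\eqref{eq:rn-wave-end} using the inverse relation \eqref{eq:JW-inverse}, form the parity combinations of Eq.~\eqref{eq:parity-state}, and apply the rephasing $\mathbf U_P$. Since $(\K_\ell^P)^{\TF}=\begin{pmatrix}-3M&2PQ\sqrt\Lambda\\2PQ\sqrt\Lambda&3M\end{pmatrix}$, the theorem reduces to two scalar identities. The first fixes the diagonal difference:
\[
\tfrac12\bigl(A^P_{gg}-A^P_{\gamma\gamma}\bigr)=3M\,c_P(\ell).
\]
The second fixes the off-diagonal entry:
\[
A^P_{\gamma g}=2PQ\sqrt\Lambda\,c_P(\ell),
\]
with $c_+=(\Lambda+4)/[(\Lambda+2)\Lambda]$ and $c_-=1/(\Lambda+2)$. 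The commutator \eqref{eq:RN-comm} then follows at once. Both $\A^P_{\ell,0}$ and $\K^P_\ell$ equal their trace-free parts plus multiples of $\mathbf 1_2$, and the trace-free parts are proportional. The symmetry $A_{g\gamma}=A_{\gamma g}$ follows from the time-reversal relation $f_{g\gamma}^{h_fh_i}=f_{\gamma g}^{h_ih_f}$ together with $d^\ell_{m'm}=(-1)^{m-m'}d^\ell_{mm'}$. The Thomson terms drop out for $\ell\ge2$ by the orthogonality remark in the footnote.

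Concretely, the relevant integrals are as follows. For the graviton–graviton entry we need $a_{22}=\int c_\theta^4 s_\theta^{-2}d^\ell_{22}$ and $a_{2,-2}=\int s_\theta^2 d^\ell_{2,-2}$. For the photon–photon entry (pole part only) we need $b_{11}=\int c_\theta^2 s_\theta^{-2}d^\ell_{11}$. For the conversion entry we need $e_{12}=\int c_\theta^3 s_\theta^{-1}d^\ell_{12}$ and $e_{1,-2}=-\int c_\theta s_\theta d^\ell_{1,-2}$. All integrals are over $z\le1-\delta$, and the parity combinations take the form $a_{22}+P\,a_{2,-2}$, $b_{11}-P\,b_{1,-1}$ with $b_{1,-1}=0$ at zeroth order, and $e_{12}\pm e_{1,-2}$ with signs fixed by $\eta_I$ and $\mathbf U_P$.

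The non-pole integrals are polynomial in $z$ times a Wigner function. I would evaluate them by writing $s_\theta^2=(1-z)/2$ and $c_\theta s_\theta$ as low-rank $d$-functions, then reducing products via Clebsch–Gordan or Jacobi-polynomial recursion. By orthogonality this leaves only finite rational functions of $\ell$.

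For the forward-singular pieces the key step is the difference $a_{22}-b_{11}$. I would write $d^\ell_{\lambda\lambda}=c_\theta^{2\lambda}P^{(0,2\lambda)}_{\ell-\lambda}(z)$, so the integrands become $c_\theta^{2\lambda+2}s_\theta^{-2}P^{(0,2\lambda)}_{\ell-\lambda}$. Subtracting the common $\log\delta$ coefficient, which is $M\cdot$ identical values at $z=1$ and hence proportional to $\mathbf 1_2$, leaves regular integrands. I would then evaluate these using the standard formula $\int_{-1}^1 \frac{P_n^{(\alpha,\beta)}(z)-P_n^{(\alpha,\beta)}(1)}{1-z}(1+z)^\beta\,dz$, which yields harmonic numbers. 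The harmonic numbers $H_\ell$ must cancel between the two channels, leaving a rational function. The pole integral $e_{12}$ is treated the same way, except that it is only $s_\theta^{-1}$-singular and is therefore already convergent.

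I expect the main obstacle to be this step: demonstrating the exact cancellation of the harmonic-number parts in $a_{22}-b_{11}$, and then matching the leftover rationals to $6M\,c_P(\ell)$ and $2PQ\sqrt\Lambda\,c_P(\ell)$. The factor $\sqrt\Lambda$ should arise from the normalization ratio between $d^\ell_{12}$ and $d^\ell_{11}$, namely $\sqrt{(\ell-1)(\ell+2)}$.

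As a fallback I would avoid closed forms. I would verify the two identities for $\ell=2,3,4$ explicitly, and then prove them for all $\ell$ by showing that both sides satisfy the same three-term recursion in $\ell$, inherited from the Jacobi recursion of the $d$-functions.
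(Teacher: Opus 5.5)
Your computational route is the paper's own. Your two scalar identities are its parametrization $A^P_{\gamma g}=PQB_P$, $A^P_{\gamma\gamma}-A^P_{gg}=MD_P$. Here $B_P=I_{\rm nonflip}+PI_{\rm flip}$ is your $e_{12}+Pe_{1,-2}$, and $D_P=\mathcal J_\gamma-\mathcal J_{g,\rm nonflip}-P\mathcal J_{g,\rm flip}$ is your $b_{11}-a_{22}-Pa_{2,-2}$. The paper evaluates these exactly as you plan: Jacobi representations, a single cutoff $z\le1-\delta$ on the whole matrix, and terminating hypergeometric sums reduced to beta integrals. It finds that both diagonal nonflip integrals carry $2\log(2/\delta)-4H_\ell$, with $\mathcal J_{g,\rm nonflip}-\mathcal J_\gamma=6/\Lambda$. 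It also finds $\mathcal J_{g,\rm flip}=12/[(\Lambda+2)\Lambda]$, $I_{\rm nonflip}=2/\sqrt\Lambda$ and $I_{\rm flip}=4/[(\Lambda+2)\sqrt\Lambda]$. Hence $B_P=-\tfrac{\sqrt\Lambda}{3}D_P$ and $c_P=-D_P/6$, so the cancellation you flag as the main obstacle does occur.

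The one step that fails as you wrote it is the symmetry $A_{g\gamma}=A_{\gamma g}$. The identity you invoke gives $d^\ell_{21}=-d^\ell_{12}$, while $d^\ell_{2,-1}=+d^\ell_{1,-2}$. So inserting $f^{++}_{g\gamma}=f^{++}_{\gamma g}$ literally makes the helicity-nonflip part of $A_{g\gamma}$ opposite in sign to the corresponding term of $A_{\gamma g}$. Already at $\ell=2$, $\int\dd z\,d^2_{12}\,c_\theta^3/s_\theta=+1$ but $\int\dd z\,d^2_{21}\,c_\theta^3/s_\theta=-1$. The resulting matrix would not be symmetric, its trace-free part could not be proportional to the symmetric $(\K^P_\ell)^{\TF}$, and the commutator would not vanish. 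For the nonflip entries to match, the helicity amplitudes must obey the Jacob--Wick form of time reversal, $f^{JI}_{\mu_f\mu_i}=(-1)^{\mu_i-\mu_f}f^{IJ}_{\mu_i\mu_f}$, whose phase cancels the $d$-function sign. The flip entries additionally involve the parity phases $\eta_I$. The paper does not derive the symmetry from the helicity amplitudes at all. It imposes time-reversal invariance directly as $(\A^P_\ell)^{\mathsf T}=\A^P_\ell$ and computes only the $\gamma g$ entry. You should do the same, or compute the $\gamma\to g$ column from $\tfrac{\kappa q_\Phi}{2}N_f\mathcal C_0$ in phases consistent with the Jacob--Wick states, rather than rely on the $d$-function identity.
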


\begin{corollary}[Moncrief rotation angle]
\label{cor:Moncrief-angle}
\mbox{}\par\nobreak\smallskip\noindent
Adopt the canonical \((\gamma,g)\) phase convention, and denote the matrix
elements of \(\A_{\ell,0}^P\) by \(A_{JI,0}^P\). Let
\(\chi_{\ell,0}^P\) and \(\chi_{\ell,\mathrm{Mon.}}^P\) be the angles that
diagonalize \(\A_{\ell,0}^P\) and \(\K_\ell^P\), respectively, with their
signs fixed by the first equality below. Then
\begin{equation}
 \begin{aligned}
 \tan\!\left(2\chi_{\ell,0}^P\right)
 &=\frac{2A_{\gamma g,0}^P}
 {A_{\gamma\gamma,0}^P-A_{gg,0}^P}\\
 &=-\frac{2P Q\sqrt{\Lambda}}{3M}
 =\tan\!\left(2\chi_{\ell,\mathrm{Mon.}}^P\right)\,.
 \end{aligned}
 \label{eq:ratio}
\end{equation}
On the curved-background side, the Moncrief variables are conventionally
related by an \(r\)-independent rotation through the angle
\(\chi_{\ell,\mathrm{Mon.}}^P\)
\cite{Moncrief:1974am,Moncrief:1974ng,Moncrief:1975sb,
Crispino:2014rnbackscatter,OuldElHadj:2022conversion}. In the present phase
convention, these references give
\begin{equation}
 \begin{aligned}
 \sin\!\left(2\chi_{\ell,\mathrm{Mon.}}^P\right)
 &=-\frac{2P Q\sqrt{\Lambda}}{\sqrt{9M^2+4Q^2\Lambda}}\,,\\
 \cos\!\left(2\chi_{\ell,\mathrm{Mon.}}^P\right)
 &=\frac{3M}{\sqrt{9M^2+4Q^2\Lambda}}\,.\\
 %d_\ell&=\sqrt{9M^2+4Q^2\Lambda}.
 \end{aligned}
\end{equation}
Taking the ratio of the first two expressions, we find the same
angular relation as in Eq.~\eqref{eq:ratio}. Comparing the two results,
we conclude that the mixing angle obtained from the  flat-space on-shell 
amplitudes agrees exactly with the Moncrief angle. The on-shell 
calculation therefore identifies the same rotation in channel space
as the standard Moncrief analysis.
\end{corollary}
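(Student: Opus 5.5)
The plan is to derive the corollary directly from Theorem~\ref{thm:RN-tree}, without any further partial-wave integrals. The point is that the diagonalizing angle of a real symmetric $2\times2$ matrix depends only on its trace-free part, and that part is fixed by Eqs.~\eqref{eq:RN-tree-result}--\eqref{eq:RN-tree-result-end}.

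\textbf{Step 1: reduce to the trace-free part.} Write $\A^P_{\ell,0}=a\,\mathbf 1_2+(\A^P_{\ell,0})^{\TF}$. Here $\A^P_{\ell,0}$ is symmetric in the canonical $(\gamma,g)$ phase convention. This follows from $f_{g\gamma}^{h_fh_i}=f_{\gamma g}^{h_ih_f}$ together with the rephasing $\mathbf U_P$, which multiplies both conversion entries by the same sign $P$. For a symmetric matrix
\begin{equation*}
\begin{pmatrix}\alpha&\beta\\ \beta&\gamma\end{pmatrix},
\end{equation*}
the rotation angle $\chi$ that diagonalizes it obeys $\tan 2\chi=2\beta/(\alpha-\gamma)$. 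Both $\beta$ and $\alpha-\gamma$ are unchanged by adding a multiple of the identity, so they can be read off from the trace-free part. This gives the first equality in Eq.~\eqref{eq:ratio}, which the statement uses as the sign convention for $\chi^P_{\ell,0}$.

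\textbf{Step 2: insert Theorem~\ref{thm:RN-tree}.} Write $(\A^P_{\ell,0})^{\TF}=c^P_\ell\,(\K^P_\ell)^{\TF}$, with
\begin{equation*}
c^+_\ell=\frac{\Lambda+4}{(\Lambda+2)\Lambda},\qquad c^-_\ell=\frac{1}{\Lambda+2}.
\end{equation*}
From Eq.~\eqref{eq:intro-KC}, the relevant entries are
\begin{equation*}
A^P_{\gamma g,0}=2c^P_\ell PQ\sqrt\Lambda,\qquad A^P_{\gamma\gamma,0}-A^P_{gg,0}=c^P_\ell(-3M-3M)=-6Mc^P_\ell.
\end{equation*}
For $\ell\ge2$ we have $\Lambda\ge4$, so $c^P_\ell>0$ and in particular $c^P_\ell\neq0$. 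The factor $c^P_\ell$ therefore cancels, giving
\begin{equation*}
\tan 2\chi^P_{\ell,0}=\frac{4PQ\sqrt\Lambda}{-6M}=-\frac{2PQ\sqrt\Lambda}{3M}.
\end{equation*}
The same computation applied directly to $\K^P_\ell$, with off-diagonal entry $2PQ\sqrt\Lambda$ and diagonal difference $0-6M$, gives $\tan 2\chi^P_{\ell,\mathrm{Mon.}}=-2PQ\sqrt\Lambda/(3M)$. This completes Eq.~\eqref{eq:ratio}. Dividing the quoted $\sin 2\chi^P_{\ell,\mathrm{Mon.}}$ by the quoted $\cos 2\chi^P_{\ell,\mathrm{Mon.}}$ cancels the common factor $\sqrt{9M^2+4Q^2\Lambda}$ and reproduces the same value.

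\textbf{Step 3: fix the branch (the only delicate point).} Equality of the tangents fixes $2\chi$ only modulo $\pi$. A shift of $\pi$ in $2\chi$ corresponds to interchanging the two eigenvectors or flipping their signs, which is exactly the ambiguity the corollary allows. To make the identification sharp, I would normalize both angles so that $\cos2\chi>0$. This choice is consistent with the quoted $\cos 2\chi^P_{\ell,\mathrm{Mon.}}=3M/\sqrt{9M^2+4Q^2\Lambda}>0$.

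With this normalization, the sign of $\sin2\chi$ is the sign of $2\beta\cdot\mathrm{sgn}(\gamma-\alpha)$ computed from the entries above. That sign is identical for $\A^P_{\ell,0}$ and $\K^P_\ell$ precisely because $c^P_\ell>0$: a negative proportionality constant would instead swap the eigenvalue ordering. Hence $\chi^P_{\ell,0}=\chi^P_{\ell,\mathrm{Mon.}}$ on the branch $|2\chi|<\pi/2$. The statement that the two matrices share eigenspaces then restates the commutator \eqref{eq:RN-comm}.

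I expect this bookkeeping of signs and phases, rather than any computation, to be the main obstacle. It requires checking that the canonical rephasing $\mathbf U_P$ was already applied in Theorem~\ref{thm:RN-tree}, so that the factor $P$ in the off-diagonal entry is the same one appearing in the quoted Moncrief $\sin2\chi^P_{\ell,\mathrm{Mon.}}$.
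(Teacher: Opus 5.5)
Your Steps~1--2 are correct and follow the paper's own route. The proportionality constant of Theorem~\ref{thm:RN-tree} (equivalently the relation $B_P=-\tfrac{\sqrt{\Lambda}}{3}D_P$ of Appendix~\ref{app:projection}) cancels from $2A^P_{\gamma g,0}/(A^P_{\gamma\gamma,0}-A^P_{gg,0})$, and the resulting value is matched to the ratio of the quoted Moncrief $\sin$ and $\cos$.

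Step~3 is more than the corollary needs, since the corollary already fixes the angles through the tangent. It also contains a sign slip. In your $(\alpha,\beta,\gamma)$ notation, requiring $\cos2\chi>0$ gives $\sgn(\sin2\chi)=\sgn(\beta)\,\sgn(\alpha-\gamma)$, not $\sgn(\beta)\,\sgn(\gamma-\alpha)$; your rule would contradict the quoted $\sin2\chi^P_{\ell,\mathrm{Mon.}}$. This sign is unchanged by any nonzero rescaling, so on that branch equal tangents already force equal angles. The positivity of $c^P_\ell$ matters only for matching which eigenvector carries the larger eigenvalue.
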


The theorem and its corollary establish the channel rotation selected by the fixed-source coefficient \(\A_{\ell,0}^P\). We now test whether the
same rotation persists at first order in the finite-mass recoil expansion.
To this end, we evaluate the commutator of the recoil-corrected partial-wave matrix with \(\K_\ell^P\), first for the long-range pole contribution and
then with the pole-free Thomson terms included.

\begin{proposition}[First-order recoil commutator]
\label{prop:RN-recoil}
Separating the long-range pole and Thomson contributions in
Eqs.~\eqref{eq:rn-wave}--\eqref{eq:rn-wave-end},
we define
\begin{equation}
 \A_{\ell,\mathrm{full}}^P
 \equiv
 \A_{\ell,\mathrm{long}}^P+\A_{\ell,\mathrm{Th.}}^P \,.
\end{equation}
Its commutator is
\begin{align}
 \comm{\A_{\ell,\rm long}^P(\epsilon_{\rm r})}{\K_\ell^P}
 &=\epsilon_{\rm r}\,M Q\,\rho_{\ell P}^{\rm long}\,\Jmat
 \notag \\
& \quad  +\Order(\epsilon_{\rm r}^2)\,,
 \label{eq:RN-recoil-commutator}
\end{align}
where the dimensionless $\ell$-dependent coefficient $\rho_{\ell,\rm{P}}^{\rm long}$ is 
\begin{equation}
 \rho_{\ell,+}^{\rm long}
 =-\frac{12\sqrt{\Lambda}}{\Lambda+2}
  -\frac85\delta_{\ell2}\,,
 \quad
 \rho_{\ell,-}^{\rm long}
 =\frac{12(\Lambda+4)}{(\Lambda+2)\sqrt{\Lambda}} \,.
 \label{eq:RN-recoil-rho}
\end{equation}
Here \(\delta_{\ell2}\) denotes the Kronecker delta. In particular,
\[
 \rho_{2,+}^{\rm long}=-\frac{28}{5}\,,
 \quad
 \rho_{2,-}^{\rm long}=8\,,
\]
while at large \(\ell\),
\[
 \rho_{\ell,+}^{\rm long}\sim-\frac{12}{\ell}\,,
 \quad
 \rho_{\ell,-}^{\rm long}\sim\frac{12}{\ell}\,.
\]
If the pole-free Thomson contributions
\eqref{eq:thomson}--\eqref{eq:thomson-end} are included, only the
\((\ell,P)=(2,-)\) result is modified:
\begin{equation}
 \rho_{2,-}^{\rm full}
 =8\left(1-\frac{Q^2}{5M^2}\right)\,,
 \quad
 \rho_{\ell P}^{\rm full}=\rho_{\ell P}^{\rm long}
 \quad\text{otherwise}.
 \label{eq:RN-recoil-full}
\end{equation}
\end{proposition}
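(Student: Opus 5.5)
The proof reduces to computing two scalar combinations of the first-order recoil coefficients. The starting point is a purely algebraic observation about $2\times2$ symmetric matrices. By time-reversal invariance, $f_{g\gamma}^{h_fh_i}=f_{\gamma g}^{h_ih_f}$, and after the rephasing $\mathbf U_P$ every parity-resolved partial-wave matrix is symmetric. Write it as $\begin{pmatrix}a&b\\ b&c\end{pmatrix}$ and set $k\equiv 2PQ\sqrt\Lambda$. A direct multiplication then gives
\begin{equation*}
 \comm{\begin{pmatrix}a&b\\ b&c\end{pmatrix}}{\K_\ell^P}
 =\bigl[k(a-c)+6Mb\bigr]\Jmat\,.
\end{equation*}
So any commutator with $\K_\ell^P$ is automatically proportional to $\Jmat$. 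Only the scalar $k(a-c)+6Mb$ has to be computed, and only the trace-free part of the partial wave enters it. In particular, the identity-proportional forward-log regulator term drops out, as stated before Theorem~\ref{thm:RN-tree}. By Theorem~\ref{thm:RN-tree}, the $\epsilon_{\rm r}^0$ scalar vanishes. The commutator therefore starts at $\Order(\epsilon_{\rm r})$, with coefficient
\begin{equation*}
 2PQ\sqrt\Lambda\,\bigl(A^P_{\gamma\gamma,\rm r}-A^P_{gg,\rm r}\bigr)+6M\,A^P_{\gamma g,\rm r}\,.
\end{equation*}

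Next I compute $\A^P_{\ell,\rm r}$. I will isolate the $\epsilon_{\rm r}$-linear pieces of Eqs.~\eqref{eq:rn-wave}--\eqref{eq:rn-wave-end}. For the long-range part these are $M\cth^2/\sth^2$ for $\gamma\gamma^{++}$; $M(\cth^4/\sth^2)(1+2\sth^2)$ and $-M\sth^2(3-2\sth^2)$ for $gg$; and $2Q\cth^3\sth$ and $2Q\cth^3\sth$ for $\gamma g$ (the latter from $-Q\cth\sth\cdot(-2\cth^2)$). Each piece is a single pole term times a polynomial in $\sth^2$. Reducing them with $\sth^2=1-\cth^2$ leaves either the same pole structure as the fixed-source terms or pure polynomials in $\cth,\sth$.

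I then project each piece with Eq.~\eqref{eq:JW-inverse}, using the finite Wigner-$d$ integrals of Appendix~\ref{app:projection}, and combine helicity components into parity components via Eq.~\eqref{eq:parity-state} and $\mathbf U_P$. A check: the $\gamma\gamma$ and $gg$ entries are proportional to $M$ and the $\gamma g$ entry to $Q$. The scalar above is therefore manifestly $MQ$ times a dimensionless number, which defines $\rho^{\rm long}_{\ell P}$.

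The polynomial pieces are expanded in Wigner-$d$ functions of low degree. By orthogonality they contribute only to $\ell\le 2$, and this is where I expect the $-\tfrac85\delta_{\ell2}$ term in the polar sector to come from. The pole pieces give the generic rational $\ell$-dependence, $-12\sqrt\Lambda/(\Lambda+2)$ and $12(\Lambda+4)/[(\Lambda+2)\sqrt\Lambda]$. The large-$\ell$ asymptotics then follow immediately from $\Lambda\sim\ell^2$.

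For the Thomson contribution I project the $\epsilon_{\rm r}$-linear parts of Eqs.~\eqref{eq:thomson}--\eqref{eq:thomson-end}, namely $(Q^2/M)\cth^2(1-2\sth^2)$ and $(Q^2/M)\sth^2(3-2\sth^2)$. These are polynomials of degree at most two in $\cos\theta$, with helicity indices $\mu=\pm1$. Their projections therefore vanish for $\ell\ge3$, and I will check that at $\ell=2$ the two helicity projections cancel in the polar combination. The surviving odd-parity piece adds to $A^-_{\gamma\gamma,\rm r}$ only, contributing $2PQ\sqrt\Lambda\cdot\Delta A_{\gamma\gamma}$ with $\Delta A_{\gamma\gamma}\propto Q^2/M$. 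This yields a correction of relative size $Q^2/M^2$, to be matched against $-\tfrac85 Q^2/M^2$ at $\ell=2$ ($\Lambda=4$).

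The main obstacle is the bookkeeping in the recoil Wigner-$d$ integrals. Each integral must be evaluated for general $\ell$ with mixed helicity indices $(\mu_f,\mu_i)\in\{(\pm1,\pm1),(\pm2,\pm2),(\pm1,\pm2)\}$, and the low-$\ell$ polynomial remainders, which produce the Kronecker-delta terms, must be kept without dropping pieces. To control this I would first verify the pole-term projections against the $\epsilon_{\rm r}^0$ entries already fixed by Theorem~\ref{thm:RN-tree}. I would then treat the polynomial remainders separately by explicit expansion in $d^1$ and $d^2$ functions, and finally check the result numerically at $\ell=2,3$.
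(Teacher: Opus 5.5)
Your plan is correct and is essentially the paper's proof: Appendix~\ref{app:projection} also reduces the commutator to $2PMQ(\sqrt\Lambda D_P+3B_P)\Jmat$ (your $k(a-c)+6Mb$ with $a-c=MD_P$, $b=PQB_P$), lets Theorem~\ref{thm:RN-tree} remove the $\epsilon_{\rm r}^0$ term, and obtains recoil integrals of exactly the form you anticipate, namely rescaled fixed-source integrals plus $\delta_{\ell2}$ remainders ($\mathcal J^{(\rm r)}_\gamma=\mathcal J^{(0)}_\gamma$, $\mathcal J^{(\rm r)}_{g,\rm flip}=-3\mathcal J^{(0)}_{g,\rm flip}+\tfrac45\delta_{\ell2}$, $I^{(\rm r)}_{\rm flip}=-2I^{(0)}_{\rm flip}+\tfrac25\delta_{\ell2}$, and so on), with the Thomson pieces adding $\tfrac{Q^2}{M^2}\tfrac{1-P}{5}\delta_{\ell2}$ to $D_P$. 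The one justification to tighten is the finite-support claim: low degree in $\cth,\sth$ or in $\cos\theta$ is not enough (e.g.\ $-3M\sth^2$ and $2Q\cth^3\sth$ have components at every $\ell$ against $d^\ell_{2,-2}$ and $d^\ell_{1,-2}$), and what actually confines your remainders $2M\cth^4$, $2M\sth^4$, $2Q\cth^3\sth$ (in the $(1,2)$ slot), $-2Q\cth\sth^3$, $\cth^2\cos\theta$ and $\sth^2(2+\cos\theta)$ to $\ell\le2$ is that each is the prefactor $\cth^{|\mu_f+\mu_i|}\sth^{|\mu_f-\mu_i|}$ of the relevant $d^\ell_{\mu_f\mu_i}$ times a polynomial in $\cos\theta$ of degree at most one, so Jacobi orthogonality applies.
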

\vspace{1\baselineskip}

The nonzero commutator in Eq.~\eqref{eq:RN-recoil-commutator}
reflects a finite-mass two-body recoil effect.
Appendix~\ref{app:projection} gives the angular integrals leading to
Eqs.~\eqref{eq:RN-recoil-rho} and \eqref{eq:RN-recoil-full}.

Finally, we summarize what follows directly from the amplitude analysis. For every radiative multipole \(\ell \geq 2\), we show that the trace-free part of the fixed-source partial-wave matrix is proportional to \((\K_\ell^P)^{\mathrm{TF}}\). It therefore commutes with \(\K_\ell^P\) and has the same eigenspaces. Equivalently,
the channel rotation extracted from the flat-space amplitudes agrees with the Moncrief rotation, up to phase conventions and interchange of the eigenvectors. The first recoil correction generally has a nonzero commutator, showing that this agreement applies specifically to the fixed-source coefficient \(\A_{\ell,0}^P\). These conclusions concern the
projected amplitudes. The all-radius diagonalization of the exact RN perturbation equations is established independently in Sec.~\ref{sec:radial}.

\section{Born matching and the exact RN potential algebra}
\label{sec:radial}
\subsection{First-Born information extracted from the amplitude}
\label{subsec:First-Born}

Through the first Born relation, it is found that the amplitude-derived partial-wave matrix elements determine the trace-free and channel-dependent coefficient, ${\mathbf W}$, of the leading \(r^{-3}\) term in the weak-field radial potential:
\begin{equation}
\left[
 \left( -\frac{\dd^2}{\dd r_*^2}
 +\frac{\Lambda+2}{r^2}\right)\mathbf 1_2
 +\frac{\mathbf W}{r^3}
\right]\boldsymbol{\psi}
=\omega^2\boldsymbol{\psi} \,.
 \label{eq:radial-sign}
\end{equation}
Here, $\boldsymbol{\psi}=\left(\psi_\gamma, \psi_g\right)^T$ 
is the two-component radial field, and 
$r_*$ is the RN tortoise coordinate defined by 
$d r_* / d r=f(r)^{-1}$.
No higher-order radial information is inferred from the tree-level amplitude.
The Riccati--Bessel integral derived in Appendix~\ref{app:projection} gives  the first Born relation
\begin{equation}
 {\mathbf S}_\ell^P
 =\mathbf 1_2+\ii\omega\mathbf A_{\ell,0}^P+\cdots\,,
 \quad
 \left(\mathbf A_{\ell,0}^P\right)^{\TF}
 =-\frac{\left(\mathbf W_\ell^P\right)^{\TF}}{\Lambda+2}\,.
 \label{eq:born}
\end{equation}
The ellipsis denotes second- and higher-Born iterations of the radial potential, 
rather than recoil corrections.
Inverting the first-Born relation in Eq.~\eqref{eq:born} using the
amplitude results in Eqs.~\eqref{eq:RN-tree-result}--\eqref{eq:RN-tree-result-end},
we obtain
\begin{align}
\begin{aligned}
 \left(\mathbf W_{\ell,\rm e}^{\mathrm{amp}}\right)^{\TF}
 &=- \frac{\Lambda + 4}{\Lambda} 
  \bigl(\K_\ell^+\bigr)^{\TF},\\
  \left(\mathbf W_{\ell,\rm o}^{\mathrm{amp}}\right)^{\TF}
 &=-\bigl(\K_\ell^-\bigr)^{\TF}.
 \label{eq:W-weak}
\end{aligned} 
\end{align}
Here and following, the even (e) and odd (o) labels in the (exact) classical potential corresponds to 
$P=+1$ (polar) and $P=-1$ (axial parity) sectors in the previous section.
These expressions are the on-shell amplitude predictions for the
trace-free, channel-dependent part of the leading \(r^{-3}\) weak-field
interaction. 

Independently, the large-\(r\) expansion of the exact RN
potentials gives the same coefficients, including the relative
normalization between the even- and odd-parity sectors.
This Born matching is an expansion in the strength of the background
interaction, not a low-frequency expansion, and therefore does not
require \(\omega M\ll1\). At tree level,
it determines neither the
channel-independent part of the potential nor the higher powers of
\(1/r\). We now introduce the exact RN potentials as independent
curved-background input and determine whether the amplitude-derived
projectors persist throughout the full radial problem.

\subsection{Independent all-radius closure from the exact RN potentials}
\label{subsec:exact-RN-closure}

We now leave the on-shell amplitude analysis, and instead introduce the exact RN
perturbation potentials as independent curved-background input.
The following all-radius argument is therefore not a consequence of the
tree-level amplitude.
Its purpose is to determine whether the channel direction inferred from the leading Born tail persists in the complete radial problem.

For each fixed \((\ell,P)\), the trace-free part of the channel matrix
\(\K_\ell^P\) satisfies 
\begin{equation}
 \left[\bigl(\K_\ell^P\bigr)^{\mathrm{TF}}\right]^2=\left(9M^2+4Q^2\Lambda\right)\mathbf{1}_2\,,
 \label{eq:minpoly}
\end{equation}
This follows immediately from the Cayley--Hamilton theorem for a $2\times 2$ matrix.
It therefore defines the constant spectral projectors
\begin{align}
    \begin{aligned}
 \Pj_{\ell,\pm}^P
 &=
 \frac12\left[
 \mathbf{1}_2\pm\frac{1}{\sqrt{9M^2+4Q^2\Lambda}}\bigl(\K_\ell^P\bigr)^{\mathrm{TF}}
 \right]\,,\\[1mm]
 k_\pm & =3M\pm \sqrt{9M^2+4Q^2\Lambda}\,,
 \label{eq:projectors}
 \end{aligned}
\end{align}
with
\begin{align}
\mathbf{K}_{\ell}^P \Pj_{\ell, \pm}^P& =k_{ \pm} \Pj_{\ell, \pm}^P\,, \quad & \Pj_{\ell,+}^P+\Pj_{\ell,-}^P &=\mathbf{1}_2\,,\\
\left(\Pj_{\ell, \pm}^P \right)^2 & =\Pj_{\ell, \pm}^P\,, \quad & \Pj_{\ell, +}^P \Pj_{\ell, -}^P& =0\,,
\end{align}
where \(k_\pm\) are the two eigenvalues of \(\K_\ell^P\).
These projectors depend on \((M,Q,\ell,P)\), but not on \(r\).

The exact classical odd-parity matrix potential can be expressed as
\begin{equation}
 \mathbf V_{\ell,{\rm o}}(r)
 =
 f(r)\left[
 \left(
 \frac{\Lambda+2}{r^2}+\frac{4Q^2}{r^4}
 \right)\mathbf{1}_2
 -
 \frac{\K_\ell^{-}}{r^3}
 \right]\,,
 \label{eq:Vodd}
\end{equation}
where $f(r)$ is the RN metric function in Eq.~\eqref{eq:metric}.
In the corresponding canonically rephased channel basis, 
the even-parity potential can be written as
\begin{align}
 \mathbf V_{\ell,{\rm e}}(r)
 &=
\mathbf U_{-} \mathbf V_{\ell,{\rm o}}(r)\mathbf U_{-}
 \notag \\
 & \quad +
 2\K_\ell^{+}
 \frac{\dd}{\dd r_*}
 \left[
 \frac{f(r)}{r}
 \left(
 \Lambda r\mathbf{1}_2+\K_\ell^{+}
 \right)^{-1}
 \right]\,.
 \label{eq:Veven}
\end{align}
It should be noted here that ${\mathbf U_{-}}\K_\ell^{-}{\mathbf U_{-}}= \K_\ell^{+} $ in the first term.

For the general $2\times 2$ matrix $\K$, the Cayley--Hamilton theorem gives the identity:
\begin{align}
\left( \lambda \mathbf{1}_2 +\mathbf{K}\right)^{-1}=\frac{(\lambda+\operatorname{tr} \mathbf{K}) \mathbf{1}_2-\mathbf{K}}{\lambda(\lambda + \operatorname{tr} \mathbf{K}) + \operatorname{det} \mathbf{K}}\,,
\end{align}
with $\lambda \in \mathbb{C}$ such that  $\operatorname{det}(\lambda \mathbf{1}_2+\mathbf{K}) \neq 0$. 
For \(\K=\K_\ell^{+}\) and $\lambda = \Lambda r$,
this identity becomes 
\begin{equation}
 \left(\Lambda r\mathbf{1}_2+\K_\ell^{+} \right)^{-1}
 =
 \frac{
 (\Lambda r + 6M)\mathbf{1}_2-\K_\ell^{+}
 }{
 \Lambda r (\Lambda r +6M)-4Q^2\Lambda
 }\,.
 \label{eq:resolvent}
\end{equation}
These show that
the resolvent of the exact classical potential is a linear combination of \(\mathbf{1}_2\) and
\(\K_\ell^P\). Since \(\K_\ell^P\) is independent of \(r\), the
\(r_*\)-derivative in Eq.~\eqref{eq:Veven} differentiates only the
scalar coefficient functions in this decomposition. 

Consequently, for
each parity sector \(s={\rm o}, {\rm e}\),
\begin{equation}
 \mathbf V_{\ell,s}(r)
 =
 v_{\ell,s,0}(r)\mathbf{1}_2
 +
 v_{\ell,s,1}(r)\K_\ell^P
 \in
 \Span\left\{
 \mathbf{1}_2,\K_\ell^P
 \right\}.
 \label{eq:exact-algebra-decomposition}
\end{equation}

Using the constant spectral projectors, 
the same statement can be written as
\begin{equation}
 \mathbf V_{\ell,s}(r)
 =
 V_{\ell,s,+}(r)\, \Pj_{\ell,+}^P
 +
 V_{\ell,s,-}(r)\, \Pj_{\ell,-}^P\,.
 \label{eq:exact-spectral-decomposition}
\end{equation}
It follows immediately that
\begin{align}
 \comm{\mathbf V_{\ell,s}(r)}{\K_\ell^P}&=0\,,\\
 \comm{
 \mathbf V_{\ell,s}(r)
 }{
 \mathbf V_{\ell,s}(r')
 }& =0
 \quad
 \text{for all }r,r'.
 \label{eq:exact-algebra}
\end{align}
Because the radial kinetic operator is proportional to the identity in
channel space and the projectors \(\Pj_{\ell,\pm}^P\) are independent
of \(r\), no derivative mixing is generated by this transformation.
The exact classical RN radial system is therefore separated, at every radius,
into two independent master equations by the same constant projectors.

This is the curved-background part of the argument. Combining it with
the amplitude/Born result of the previous  subsection, we conclude that
the candidate projectors selected by the fixed-source flat-space tree-level 
amplitudes are precisely the projectors that diagonalize the complete
RN radial potentials.

The logical status of the two results should be distinguished. The
tree-level amplitude determines the trace-free, channel-dependent part
of the leading \(r^{-3}\) coefficient and reconstructs its spectral
projectors. It does not determine the channel-independent part of the
potential, the higher powers of \(1/r\), or the complete radial
functions \(V_{\ell,s,\pm}(r)\). Those additional potentials, 
together with
the proof of all-radius closure, 
are supplied here by the exact
classical RN perturbation equations.

\section{Promotion to arbitrary-spin and aligned Dirac kernel}
\label{sec:spin}

To extend the amplitude construction from the RN
background to the Kerr--Newman (KN) background, the
heavy source must carry not only mass and electric charge but also angular momentum.  
At linear order in spin, 
rotation generates gravitomagnetic and
magnetic-dipole couplings that cannot be encoded by the scalar source used
for the RN Compton scatterings.
A spinning source is therefore required on the amplitude side.

The simplest choice is a minimally coupled massive Dirac field, $\Psi$, 
for which minimal electromagnetic coupling fixes the spin-dipole interaction to the Dirac value \(g=2\). We use the Dirac field only as a coefficient extractor. 
It will be shown that 
the coefficient linear in the classical spin vector is independent of the massive-spin representation. 
It could therefore be extracted from a
spin-\(\tfrac12\) amplitude and promoted to arbitrary spin, although this
does not determine quadratic or higher-spin terms.
We perform this
extraction on the aligned kinematic slice. 
Let $S^\mu\equiv W^\mu/m_\Psi$ denote the covariant spin vector of the massive source, 
where $W_\mu \equiv \frac12 \varepsilon_{\mu\nu \rho\sigma}J^{\nu\rho}P^\sigma$ is the Pauli--Lubanski pseudovector. 
In the source rest frame, it takes the form $S^\mu=(0,\boldsymbol S)$. 
Introducing the mass-rescaled spin vector
\begin{align}
a^\mu\equiv \frac{S^\mu}{m_\Psi}\,,
\label{eq:amu}
\end{align}
we match the source parameters to the
KN quantities according to
\begin{equation}
 \begin{gathered}
 M=Gm_\Psi\,,\quad
 Q=\sqrt{\frac{G}{4\pi}}\,q_\Psi\,,\quad
 a\equiv \sqrt{-a_\mu a^\mu}  = \frac{\hbar j}{m_\Psi}\,.
 \end{gathered}
 \label{eq:correlated-classical-limit}
\end{equation}
Here, $j$ denotes the spin quantum number of the massive source and $j=\frac{1}{2}$ for the Dirac field.
Note that we denote the total angular momentum $J$ of the Jacob--Wick expansion in this section, 
while reserving $\ell$ for the spherical RN multipole. (In the non-rotating limit, they should be identified.)
In the amplitude formulae we set \(\hbar=1\); with \(\hbar\) restored, the two
independent dimensionless expansion parameters are
\begin{equation}
 \epsilon_{\rm r}\equiv\frac{\hbar\omega}{m_\Psi}\,,
 \quad
 \epsilon_{\rm spin}\equiv\omega a \,,
 \label{eq:two-small-parameters}
\end{equation}
and \(\epsilon_{\rm spin}=\omega a\) on the aligned slice.  
Since $\epsilon_{\mathrm{spin}}=j \epsilon_{\mathrm{r}}$, keeping $\epsilon_{\mathrm{spin}}$ finite as $\epsilon_{\mathrm{r}} \rightarrow 0$ requires the correlated large-spin limit $j, m_\Psi \to \infty$ with $a=\hbar j / m_\Psi$ fixed. 
Here, we use the Dirac field to compute the spin-universal term.
At fixed \(j\), by contrast, \(m_\Psi\to\infty\)
forces \(a\to0\); a single spin-\(\tfrac12\) state cannot itself realize the
finite-\(a\) limit.
In the rest of this section,
\(\Order (\epsilon_{\rm spin}^n)\) denotes terms of at least
\(n\)-th order in spin at fixed recoil, whereas
\(\Order (\epsilon_{\rm r}^n)\) denotes terms with at least \(n\)
powers of recoil after the spin-even/odd projection.  Mixed terms such as
\(\epsilon_{\rm spin}\epsilon_{\rm r}\) belong to the latter class.  This
notation is always relative to the displayed leading amplitude.
For compactness, set
\begin{equation}
 \delta_{\rm sub}\equiv
 \Order (\epsilon_{\rm spin}^2)
 +\Order (\epsilon_{\rm r}) \,.
 \label{eq:subleading-symbol}
\end{equation}
\begin{proposition}[Linear-spin promotion]
\label{prop:spin-promotion}
For a minimally coupled massive spin-\(j\) source, the coefficient linear in
\(a^\mu\) is independent of \(j\). 
 Consequently it may be extracted from a
spin-\(\tfrac12\) amplitude and retained in the correlated limit
\eqref{eq:correlated-classical-limit}.  This statement does not determine the
quadratic or higher-spin coefficients.
\end{proposition}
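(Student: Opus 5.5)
The plan is to establish the linear-in-spin universality at the level of the three-point amplitudes, and then push it to the Compton and conversion amplitudes by factorization.

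\emph{Three-point amplitudes.} For a minimally coupled massive spin-$j$ particle in the Arkani-Hamed--Huang--Huang sense, the amplitudes with one photon or one graviton of helicity $\pm$ take the form
\begin{equation}
 \mathcal{M}_3^{\pm} \propto x^{\pm h}\,\frac{\langle\mathbf{2 1}\rangle^{2j}}{m^{2j}} .
\end{equation}
In the classical-spin representation this equals the scalar amplitude times $\exp(\pm k\cdot a)$, up to terms that are subleading in $\hbar$. I would write the spin-$j$ product of $2j$ identical spinor brackets in the form
\begin{equation}
 \bigl(\mathbf{1}+ k\cdot S/(j\,m)+\dots\bigr)^{2j}, \qquad\text{where}\qquad \frac{k_\mu \epsilon_\nu J^{\mu\nu}}{p\cdot\epsilon}
\end{equation}
is the generator acting on each constituent. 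Expanding to first order then gives $2j\times\bigl(1/(2j)\bigr)$ times the spin-vector term, i.e.\ exactly $k\cdot S/m = k\cdot a$, independent of $j$. This is the Holstein--Ross / Vaidya statement that $g=2$ and the gravitomagnetic dipole are universal. Equivalently, I would check it against the Lorentz-generator form of the minimal vertex,
\begin{equation}
 \epsilon\cdot p + \tfrac{1}{2}\, k_\mu \epsilon_\nu \Sigma^{\mu\nu},
\end{equation}
with $\Sigma^{\mu\nu}$ in the spin-$j$ representation, whose expectation value yields $S^{\mu\nu}$ with unit coefficient for any $j$. This is the $g=2$ property of minimal coupling.

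\emph{Four-point amplitudes.} I would lift the statement to the Compton and conversion amplitudes using the factorized form \eqref{eq:tree-matrix}, generalized to spin. On each massive-exchange pole the residue is a product of three-point amplitudes. At linear order in $a$, that residue is the scalar residue times the sum of the two single-vertex linear-spin insertions, which are $j$-independent by the step above. The massless $t$-channel poles ($g$- and $\gamma$-pole diagrams) involve only one massive three-point vertex, so the same argument applies to them.

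\emph{Contact ambiguity (main obstacle).} For $j\le 1$ (indeed for $j\le 2$ in gravity) the Compton amplitude is fixed by factorization without contact ambiguity. The issue is that polynomial contact terms could in principle carry linear-spin pieces with $j$-dependent coefficients. I would argue that any contact term linear in $a$ must have mass dimension and helicity weight incompatible with the $1/\hbar$ counting of the classical limit. Concretely, after restoring $\hbar$, a polynomial term linear in $S$ is suppressed by $\hbar$ relative to the pole terms, so it falls into $\Order(\epsilon_{\rm r})$, i.e.\ into $\delta_{\rm sub}$.

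I would then close the argument by reading off the Dirac coefficient explicitly and verifying that it matches the $j$-independent expression. I would state clearly that quadratic terms involve $\langle S^{\mu}S^{\nu}\rangle$, whose representation-dependent pieces (e.g.\ $S^2$ versus $j(j+1)$) are not fixed by this argument. Those terms therefore remain undetermined, consistent with the final sentence of the statement.
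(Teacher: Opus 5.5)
Your three-point step and your massive-pole factorization step match the paper's argument. All $j$-dependence sits in $(\langle\mathbf{13}\rangle/m)^{2j}$, whose coherent-state expansion $(1+\sigma\Delta\cdot a/(2j))^{2j}$ has unit linear coefficient, and the massive residues inherit it. Your remark that the massless $t$-channel poles contain only one massive vertex is correct, and the paper leaves it implicit. There are two slips in this part. First, the per-factor term must be $\sigma\,k\cdot a/(2j)=\sigma\,k\cdot S/(2jm)$; with your $k\cdot S/(jm)$ the linear coefficient would come out as $2$. Second, the minimal negative-helicity amplitude carries $[\mathbf{13}]^{2j}$ with $\bar x$, not $\langle\mathbf{21}\rangle^{2j}$. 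It is the square bracket that produces the opposite sign $\sigma$.

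The genuine gap is the contact-term step. The inference ``suppressed by $\hbar$ relative to the pole terms, hence $\Order(\epsilon_{\rm r})$'' fails. The massive-pole terms are individually superclassical (residues $\sim m^2$ over $D_{s,u}\sim m\omega$) and cancel between the $s$- and $u$-channels. For example, in $\mathcal C_0^{++}=-2m^2/D_s-2m^2/D_u-2$ the polynomial $-2$ (the seagull) is one power of $\hbar$ below the pole terms, yet it is of the same order as the amplitude $\simeq-(1+\cos\theta)$. The same happens at linear spin for the contacts that gauge invariance requires. Take a Dirac source in $g\to\gamma$: the Gordon spin current in the photon--graviton seagull, $\propto\kappa q\,(\e_i\cdot\e_f^*)\,\bar u(p')\sigma^{\mu\nu}\e_{i\mu}\Delta_\nu u(p)/m\sim\kappa q\,\omega$, is as large as the full linear-spin term $\tfrac{\kappa q}{2}N_i\mathcal C_0\,\sigma_f\Delta\cdot a\sim\kappa q\,m\,\omega a$ at $a=1/(2m)$.

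Power counting can only dispose of genuinely independent gauge-invariant local terms built from field strengths or curvatures (e.g.\ $F_iF_fS/m^2\sim\epsilon_{\rm r}\epsilon_{\rm spin}$). Even then it needs the unstated assumption that their coefficients scale as inverse powers of $m$, which black-hole-type non-minimal operators (curvature--field-strength, tidal) need not obey. Your claim that the $j\le1$ Compton amplitude has no contact ambiguity likewise holds only after such higher-dimension terms are dropped. Since the correlated limit needs $j\to\infty$, outside the range you cite, your whole argument rests on this step.

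The missing idea is the paper's. At two-derivative order, the required pole-free part is the linear Ward-identity completion of the residues, so its linear-in-$a$ piece is inherited from the $j$-independent three-point coefficient. Independent local operators are then excluded by the minimal-coupling hypothesis itself, not by $\hbar$-counting.
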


\emph{Proof.}
We first isolate where the representation label \(j\) enters the three-point
on-shell amplitudes.
On the positive-helicity chiral branch, little-group covariance fixes
the minimal amplitude of two massive spin-\(j\) legs and one massless leg of
helicity \(h>0\) to be
\begin{equation}
 \M_3^{[j],h}
 =
 \left(\frac{\langle\mathbf{1}\mathbf3\rangle}{m_\Psi}\right)^{2j}
 \M_3^{[0],h}.
 \label{eq:minimal-spin-j-seed}
\end{equation}
Here,
\(\langle\mathbf{1}\mathbf3\rangle\) is the bold massive
\(\mathrm{SU}(2)\) little-group bracket, while \(\M_3^{[0],h}\) is the scalar
three-point amplitudes in Eqs.~\eqref{eq:3pt-seeds}--\eqref{eq:3pt-seeds-end}: \(\sqrt2\,q_\Psi m_\Psi x\) for a photon and
\(\kappa m_\Psi^2x^2/2\) for a graviton.  On the conjugate-helicity branch,
\(\langle\mathbf{1}\mathbf3\rangle\) and \(x\) are replaced by
\([\mathbf{1}\mathbf3]\) and \(\bar x\), respectively.  Thus all dependence on
the massive spin representation is confined to the bold-bracket factor.
Here, the
power of \(x\), which carries the massless little-group weight, is fixed by
the species and is independent of \(j\).

Now evaluate this factor in a diagonal spin-coherent state.  With the
classical spin vector \(a^\mu\) of Eq.~\eqref{eq:amu} and momentum transfer
\(\Delta^\mu \equiv (k-k')^\mu\), its leading-recoil matrix element is
\begin{align}
 \left(\frac{\langle\mathbf{1}\mathbf3\rangle}{m_\Psi}\right)^{2j}
 &\to \left(1+\sigma\frac{\Delta\cdot a}{2j}\right)^{2j}\notag\\
 &=1+\sigma\Delta\cdot a\notag\\
 &\quad+\frac{2j-1}{4j}(\Delta\cdot a)^2+\cdots ,
 \label{eq:finite-j-binomial}
\end{align}
where \(\sigma=\pm1\) labels the massless helicity convention
\cite{ArkaniHamed:2017jhn,ArkaniHamed:2019ymq,Moynihan:2019bor,
Chung:2019duq}.  
Reading \eqref{eq:finite-j-binomial} as a binomial expansion
gives three immediate facts; (i) the coefficient of \(\Delta\cdot a\) is
exactly one for every \(j\),
(ii) the quadratic and higher coefficients
depend on \(j\), with the coefficient of \((\Delta\cdot a)^2\) vanishing at
\(j=\tfrac12\) and approaching \(1/2\) as \(j\to\infty\), and (iii)
\(\lim_{j\to\infty}\left(1+\sigma\frac{\Delta\cdot a}{2j}\right)^{2j\sigma}
=\exp(\sigma\Delta\cdot a)\).

Finally, we consider the four-point tree-level on-shell amplitudes.
For the pure Dirac-QED Compton amplitude, there is no contact term. 
Local contact terms occur in the conversion amplitudes and gravitational amplitudes. 
We clarify that the pole residues determine the linear-spin coefficient, 
while the minimally coupled four-point contact terms are also included for the Ward-identity-required local completion.
Its residues on the two massive poles
factorize into products of the three-point seeds, so at one spin insertion
their coefficient inherits the representation-independent linear term in Eq.~\eqref{eq:finite-j-binomial}.  At the two-derivative order retained here, the
pole-free part is then the linear Ward-identity completion of those residues
once independent local higher-multipole couplings are excluded.  
The
factorization construction is summarized in Appendix~\ref{app:spinor-scalar},
and the full spin-\(\tfrac12\) tree-level completion is verified explicitly in Appendix~\ref{app:fermion}.  An anomalous Pauli interaction beyond the
Dirac \(g=2\) magnetic moment, as well as independent
curvature--field-strength and tidal operators,
is excluded by the
minimal-coupling assumption.  
Hence the representation-independent
three-point coefficient is also the coefficient linear in \(a^\mu\) of the
complete minimal four-point tree.  No analogous conclusion is drawn at
quadratic or higher order, whose coefficients are \(j\)-dependent by (ii).
\hfill\(\square\)\\

The Dirac fermion therefore serves as a coefficient extractor.
For bookkeeping, write
\begin{equation}
 \frac{\mathcal C_j}{\mathcal C_0}
 =1+c_{10}\epsilon_{\rm spin}
  +c_{01}\epsilon_{\rm r}
  +c_{20}\epsilon_{\rm spin}^2
  +c_{11}\epsilon_{\rm spin}\epsilon_{\rm r}+\cdots .
 \label{eq:double-expansion}
\end{equation}
Spin reversal removes every spin-even term before the no-recoil expansion:
\begin{equation}
 c_{10}
 =\lim_{\epsilon_{\rm r}\to0}
 \frac{1}{\epsilon_{\rm spin}}\,
 \frac{\mathcal C_j(a)-\mathcal C_j(-a)}
      {\mathcal C_j(a)+\mathcal C_j(-a)} \,.
 \label{eq:spin-coefficient-extractor}
\end{equation}
Let us consider that the direction of $a^\mu$ is set to be along the $z$-axis.
For \(j=\tfrac12\), \(a_z=\hbar/(2m_\Psi)\), so
\(\epsilon_{\rm spin}=\epsilon_{\rm r}/2\) in the aligned state; this
equality is precisely why the odd projection must precede the
limit.  Equation~\eqref{eq:spin-coefficient-extractor} reads \(c_{10}\);
only after that extraction is \(a_z\) replaced by a finite classical \(a\).
Let us define
\begin{align}
\begin{aligned}
\mathcal{M}_{\mathrm{av}} & \equiv \frac{\mathcal{M}\left(a_z\right)+\mathcal{M}\left(-a_z\right)}{2}\,, \\ 
\mathcal{M}_{\mathrm{spin}} &\equiv \frac{\mathcal{M}\left(a_z\right)-\mathcal{M}\left(-a_z\right)}{2}\,,
\end{aligned}
\end{align}
the operational order used below is therefore
\begin{align}
 \M_{\rm exact}^{[1/2]}
 &\xrightarrow{\ {\rm spin\ odd}\ }
 \frac{\M_{\rm spin}}{\omega a_z\M_{\rm av}}\,,\\
 c_{10} & = \lim_{\epsilon_{\rm r}\to0}
 \frac{\M_{\rm spin}}{\omega a_z\M_{\rm av}}\,,
% \frac{\M_{\rm spin}}{\omega a_z\M_{\rm av}}
% &\xrightarrow{\epsilon_{\rm r}\to0}c_{10},\\
\end{align}
followed by
\begin{align}
 %c_{10}&\to 
 \frac{\M_{\rm classical}^{(0)+(1)}}{\M_{\rm classical}^{(0)}}
 =1+c_{10}\omega a \,.
 \label{eq:order-of-operations}
\end{align}
Here \((n)\) counts powers of the classical spin, so that \(\M_{\rm classical}^{(0)}\) is the spinless amplitude of Sec.~\ref{sec:rn-tree} and \(\M_{\rm classical}^{(0)+(1)}\) is its extension to first order in \(a\). 
Because \(c_{10}\) is defined as a ratio, the common coupling and source-normalization factors cancel, and the substitution of the exterior parameters \eqref{eq:correlated-classical-limit} may be performed either before or after the extraction.

The calculation below takes both \(a^\mu\) and the incident momentum along
\(\hat z\).  We do not infer the full arbitrary-orientation Compton amplitude
by covariantizing this slice.  A vertical bar \(\left.\vphantom{X}\right|_\parallel\)
will denote evaluation in this aligned configuration.

The two QED diagrams give
\begin{multline}
 \M_{1/2}=q_\Psi^2\bar u(p')\left[
 \slashed{\e}_f^*
 \frac{\slashed p+\slashed k+m_\Psi}{2p\cdot k}
 \slashed{\e}_i\right.\\
 \left.
 -\slashed{\e}_i
 \frac{\slashed p-\slashed k'+m_\Psi}{2p\cdot k'}
 \slashed{\e}_f^*\right]u(p)\,.
 \label{eq:Dirac-Compton}
\end{multline}
Separating the spin-odd piece from recoil and then taking
\(\epsilon_{\rm r}\to0\) yields, for
\(a^\mu=(0,0,0,a)\) and \(k^\mu=\omega(1,0,0,1)\),
\begin{equation}
 \begin{aligned}
 \left.\mathcal C_{1/2}^{h_fh_i}\right|_{\parallel}
 &=\mathcal C_0^{h_fh_i}
 \left[\mathcal D_{fi}(a)+\delta_{\rm sub}\right]\,,
 \end{aligned}
 \label{eq:aligned-spin-dressing}
\end{equation}
and
\begin{equation}
 \begin{aligned}
 \mathcal D_{fi}(a)
 &\equiv1+\sigma_f(k-k')\cdot a\\
 &=1-\sigma_fa\omega(1-\cos\theta) \\
 &= 1-2 \sigma_fa\omega s_\theta^2\,,
 \end{aligned}
 \label{eq:aligned-dressing-factor}
\end{equation}
with $\sigma_f\equiv\sgn(h_f)$.
This covariant-looking form is asserted only on the aligned slice.
Appendix~\ref{app:fermion} gives the Dirac reduction.
After matching the all-incoming momentum convention and reversing the
helicity-index order, the aligned helicity-resolved \(g\to\gamma\) cross
sections of Ref.~\cite{Zheng:2026knmix} reproduce the angular and
final-photon-helicity dependence implied by \(\mathcal D_{fi}\). Our result is consistent up to convention-dependent signs with Ref.~\cite{Zheng:2026knmix}, although we do not attempt a detailed sign-by-sign comparison of the amplitudes because of differences in the polarization and Levi--Civita conventions.
\begin{proposition}[Aligned factorization]
\label{prop:linear-spin-factorization}
For a minimally coupled Dirac source, the complete tree species matrix
includes the Born, contact, and graviton-pole graphs.  In the aligned
no-recoil limit,
it has exactly one spin-dependent double-copy:
\begin{equation}
 \left.{\mathbf M}^{[1/2]}\right|_\parallel
 =
 \left[
 \mathcal D_{fi}(a)
 +\delta_{\rm sub}
 \right]
 \left.{\mathbf M}^{[0]}\right|_\parallel .
\end{equation}
\end{proposition}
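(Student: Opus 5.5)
The plan is to reduce the statement to the scalar factorization already established in Eq.~\eqref{eq:tree-matrix}, and then show that each building block acquires the same aligned spin dressing. For the Dirac source the complete tree species matrix has the same pole structure as the scalar one: Born $s$- and $u$-channel graphs, contact terms in the conversion and gravitational entries, and the graviton-pole (or photon-pole) graphs. First I will show that it assembles into
\[
\left.{\mathbf M}^{[1/2]}\right|_\parallel=
\begin{pmatrix}
 q_\Psi^2 \mathcal C_{1/2}+\M^{(g\text{-pole})}_{1/2} & \tfrac{\kappa q_\Psi}{2}N_i\,\mathcal C_{1/2}\\
 \tfrac{\kappa q_\Psi}{2}N_f\,\mathcal C_{1/2} & \tfrac{\kappa^2}{8}R\,\mathcal C_{1/2}\,\mathcal C_0
\end{pmatrix}.
\]
The key point is that the gravitational Compton amplitude for spin $\tfrac12$ is $\kappa^2 R\,\mathcal C_{1/2}\mathcal C_0/8$: one factor is a spinning QED Compton amplitude and the other is the scalar one. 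This is the standard double-copy / KLT structure for a massive spin-$\tfrac12$ field, $\text{GR}_{1/2}=\text{QED}_{1/2}\otimes\text{SQED}_0$.

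To establish this, I would compare residues on the $s$, $u$ and $t$ poles using the three-point seeds of Eq.~\eqref{eq:minimal-spin-j-seed}.
\begin{itemize}
\item For $j=\tfrac12$, the photon and graviton three-point amplitudes are both the scalar seed times a single factor $\langle\mathbf{1}\mathbf3\rangle/m_\Psi$.
\item Every massive-pole residue in every entry of the species matrix therefore carries exactly one such factor, the same as $\mathcal C_{1/2}$.
\item The contact terms are then fixed, as in the proof of Proposition~\ref{prop:spin-promotion}, by the Ward identities. The explicit Feynman-rule verification is deferred to Appendix~\ref{app:fermion}.
\end{itemize}

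Second, I will insert the aligned dressing Eq.~\eqref{eq:aligned-spin-dressing}, $\mathcal C_{1/2}|_\parallel=\mathcal C_0[\mathcal D_{fi}+\delta_{\rm sub}]$, into each entry. The two conversion entries and the $gg$ entry then immediately equal $\mathcal D_{fi}+\delta_{\rm sub}$ times their scalar counterparts, because $N_i$, $N_f$ and $R$ are spin independent.

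Third, I will treat the graviton-pole contribution to the $\gamma\gamma$ entry separately. This is the step I expect to be the main obstacle, since it is not manifestly proportional to $\mathcal C_{1/2}$.
\begin{itemize}
\item Its only pole is at $t=0$, and its residue is the product of the $\gamma\gamma g$ vertex and the graviton--Dirac three-point amplitude. The latter carries the same linear spin factor $1+\sigma\Delta\cdot a$.
\item I need to check that the helicity label $\sigma$ there coincides with $\sigma_f$. For helicity-preserving configurations $h_f=h_i$, so $\sigma_f=\sigma_i$ and there is no ambiguity.
\item Its helicity-flip components vanish by Eq.~\eqref{eq:gamma-g-pole}, so they are trivially dressed.
\item I will verify on the aligned slice, using $\Delta\cdot a=-2a\omega s_\theta^2$, that the Dirac graviton-pole amplitude equals $\mathcal D_{fi}$ times the scalar one, up to spin-even and recoil-suppressed terms.
\item If the residue argument leaves a pole-free ambiguity, I will fix it by direct evaluation of the Dirac graph with $\bar u(p')\gamma^\mu u(p)$ reduced to its spin-odd part at leading recoil. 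The cross-check is that the terms of the form $(\Delta\cdot a)$ times a $t$-pole-free quantity cancel.
\end{itemize}

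Finally, I will collect the four entries into the displayed form. All remaining differences are either quadratic in spin or carry a power of $\epsilon_{\rm r}$, which is exactly $\delta_{\rm sub}$. For this I will use the ordering of Eq.~\eqref{eq:order-of-operations}: spin-odd projection first, then $\epsilon_{\rm r}\to0$, and only then promotion $a_z\to a$.
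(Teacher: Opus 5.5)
Your proposal is correct and follows essentially the paper's route. The paper likewise combines three ingredients: the aligned QED dressing $\mathcal C_{1/2}|_\parallel=\mathcal C_0[\mathcal D_{fi}+\delta_{\rm sub}]$; the full-tree identities $\M_{g\to\gamma}^{[1/2]}=\tfrac{\kappa q}{2}N_i\mathcal C_{1/2}$, $\M_{\gamma\to g}^{[1/2]}=\tfrac{\kappa q}{2}N_f\mathcal C_{1/2}$, $\M_{gg}^{[1/2]}=\tfrac{\kappa^2}{8}R\,\mathcal C_{1/2}\mathcal C_0$ (justified by a direct diagram sum using the Dirac equation and Ward identities, with a channel-by-channel residue check); and a separate Gordon-decomposed stress-tensor contraction for the graviton pole. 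Your direct-evaluation fallback in the third step is what actually carries that step: on the aligned slice $\Delta\cdot a=a\,t/(2\omega)$, so the linear-spin graviton-pole term has no $t$-pole and a residue argument alone cannot determine it.
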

\vspace{1\baselineskip}

After dividing the fermion kernels by the spin-conserving overlap
\(\bar u u/(2m_\Psi)\), the stress-tensor contraction in
Appendix~\ref{app:fermion} gives the same aligned dressing for the graviton
pole.  Full-tree-level factorization then places exactly one spin-dependent gauge
kernel in every species block.  In particular,
\begin{equation}
 \begin{aligned}
 R\,\mathcal C_{1/2}\,\mathcal C_0
 &=R\,\mathcal C_0^2
 \left[\mathcal D_{fi}+\delta_{\rm sub}\right],\\
 R\,\mathcal C_{1/2}^2
 &=R\,\mathcal C_0^2
 \left[\mathcal D_{fi}^2+\delta_{\rm sub}\right]\\
 &=R\,\mathcal C_0^2
 \bigl[1+2\sigma_f(k-k')\cdot a
 +\delta_{\rm sub}\bigr].
 \end{aligned}
\end{equation}
Thus \(R\,\mathcal C_{1/2}^2\) would double the coefficient linear in \(a\);
the physical gravitational block is
\(R\,\mathcal C_{1/2}\,\mathcal C_0\).  Appendix~\ref{app:fermion} gives the
stress-tensor contraction and standard factorization identity.
This result is consistent with the factorization formulas in Refs.~\cite{Choi:1994ax,Holstein:2006bh}. 
%(See Eqs.(3.35)--(3.36), (3.56) and Eqs. (27)--(28), (38)--(44).)

Choose the incident momentum and spin along \(+\hat z\).  At leading recoil
order,
\begin{equation}
 (k-k')\cdot a=-a\omega(1-\cos\theta)
 =- 2 a\omega s_\theta^2 .
 \label{eq:aligned}
\end{equation}
For final positive helicity, \(\sigma_f=+1\), so every correction is the
corresponding spinless wave amplitude multiplied by the common factor
\(-2a\omega\sth^2\), which is explicitly 
\begin{align}
 \label{eq:delta-f}
 \delta f_{\gamma\gamma}^{++}
 &=-2a\omega M\cth^2
 +2a\omega\frac{Q^2}{M}\cth^2\sth^2\,,\\
 \delta f_{\gamma\gamma}^{+-}
 &=+2a\omega\frac{Q^2}{M}\sth^4\,,\\
 \delta f_{gg}^{++}
 &=-2a\omega M\cth^4\,,\\
 \delta f_{gg}^{+-}
 &=-2a\omega M\sth^4\,,\\
 \delta f_{\gamma g}^{++}
 &=-2a\omega Q\cth^3\sth\,,\\
 \delta f_{\gamma g}^{+-}
 &=+2a\omega Q\cth\sth^3 \,.
 \label{eq:delta-f-end}
\end{align}
The other helicity row follows from simultaneous helicity and spin reversal.
These functions have finite Wigner content.

\begin{proposition}[Aligned \(J\)-diagonal statement]
\label{prop:aligned}
For the kinematics \eqref{eq:aligned}, the term linear in \(a\) in every
fixed-\(J\) diagonal Jacob--Wick integral vanishes for \(J\geq3\).  Thus this
restricted slice supplies no linear-spin correction, and hence no projector
test, in those diagonal blocks.
\end{proposition}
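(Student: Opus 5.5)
The plan is to show that each linear-in-$a$ correction in Eqs.~\eqref{eq:delta-f}--\eqref{eq:delta-f-end} is a \emph{finite} combination of Wigner functions $d^{J'}_{\mu_f\mu_i}(\theta)$ with $J'\le 2$. Orthogonality of the Wigner functions then kills every projection onto $J\ge3$. Because the dressing \eqref{eq:aligned-dressing-factor} multiplies the spinless amplitude by $-2a\omega\sth^2$, the $1/\sth^2$ Coulomb poles cancel. As a result, all six corrections are polynomials in $\cth,\sth$, and no forward regulator is needed. This makes the argument a purely algebraic statement about the angular functions, entry by entry.

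\textbf{Step 1 (span lemma).} For fixed helicities $(m',m)$ I would use the standard Jacobi representation
\[
d^{J}_{m'm}(\theta)\propto \cth^{|m+m'|}\,\sth^{|m-m'|}\,P^{(|m-m'|,|m+m'|)}_{J-\lambda}(\cos\theta),
\qquad \lambda\equiv\max(|m|,|m'|).
\]
Here the Jacobi polynomial has degree exactly $J-\lambda$. It follows that a function of the form $\cth^{|m+m'|}\sth^{|m-m'|}\,p_n(\cos\theta)$, with $p_n$ a polynomial of degree $n$, lies in $\Span\{d^J_{m'm}: \lambda\le J\le \lambda+n\}$. Combined with $\int_{-1}^1 d^J_{m'm}d^{J'}_{m'm}\,\dd z=\frac{2}{2J+1}\delta_{JJ'}$, such a function has vanishing projection \eqref{eq:JW-inverse} for every $J>\lambda+n$.

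\textbf{Step 2 (entry-by-entry check).} I would write each correction in the required prefactor form, using $\sth^2=(1-\cos\theta)/2$. The results are as follows.
\begin{itemize}
\item $\delta f^{++}_{\gamma\gamma}$ with $(m',m)=(1,1)$ is $\cth^2\times$(degree $\le1$), so $J\le2$.
\item $\delta f^{+-}_{\gamma\gamma}$ with $(1,-1)$ is $\sth^2\cdot\sth^2$, again of degree $1$, so $J\le2$.
\item $\delta f^{++}_{gg}\propto\cth^4$ with $(2,2)$ and $\delta f^{+-}_{gg}\propto\sth^4$ with $(2,-2)$ are pure $J=2$.
\item $\delta f^{++}_{\gamma g}\propto\cth^3\sth$ with $(1,2)$ and $\delta f^{+-}_{\gamma g}\propto\cth\sth^3$ with $(1,-2)$ have $n=0$ and $\lambda=2$, so they are pure $J=2$.
\end{itemize}
The $g\gamma$ entries follow from $f_{g\gamma}^{h_fh_i}=f_{\gamma g}^{h_ih_f}$. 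The opposite helicity row follows from simultaneous helicity and spin reversal, which maps $d^J_{m'm}\to d^J_{-m',-m}$ with the same angular content.

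\textbf{Step 3 (parity blocks).} The parity-resolved, rephased matrix $\A^P_J$ is built from linear combinations of these helicity projections through \eqref{eq:parity-state} and $\mathbf U_P$. Its linear-in-$a$ part therefore also vanishes for $J\ge3$, in every channel entry.

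The main obstacle is Step 1. The exact degree bookkeeping, in particular the lowest allowed $J=\lambda$ and the degree of the Jacobi polynomial, must be handled carefully, especially for the mixed-species entries where $|\mu_f|\ne|\mu_i|$. If a general lemma proves awkward, I would instead directly expand each function in the explicitly known $d^1$ and $d^2$ functions; for example, $d^2_{21}=-2\cth^3\sth$ and $d^2_{2,-1}=-2\cth\sth^3$. I would then verify these identities by matching coefficients.
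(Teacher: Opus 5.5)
Your proposal is correct and is essentially the paper's own argument. The paper writes each relevant $d^J_{\mu_f\mu_i}$ as $\cth^{|\mu_f+\mu_i|}\sth^{|\mu_f-\mu_i|}P^{(\cdot,\cdot)}_{J-\lambda}$ and observes that multiplying by the pole-free $\delta f$ leaves the Jacobi weight times a polynomial of degree $0$ (graviton and conversion entries) or at most $1$ (photon entries), so Jacobi orthogonality kills every $J\geq3$. Your span lemma with Wigner-$d$ orthogonality is the dual phrasing of that same orthogonality, and your entry-by-entry degree count matches the paper's.
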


The mechanism is Jacobi orthogonality.  Up to nonzero normalization factors,
\begin{equation}
\begin{gathered}
 d^J_{22}=\cth^4P_{J-2}^{(0,4)}\,,\quad
 d^J_{2,-2}=\sth^4P_{J-2}^{(4,0)}\,,\\
 d^J_{12}\propto\cth^3\sth P_{J-2}^{(1,3)}\,,\quad
 d^J_{1,-2}\propto\cth\sth^3P_{J-2}^{(3,1)}\,,\\
 d^J_{11}=\cth^2P_{J-1}^{(0,2)}\,,\quad
 d^J_{1,-1}=\sth^2P_{J-1}^{(2,0)} \,.
\end{gathered}
\end{equation}
For the graviton and conversion channels, multiplying by
\eqref{eq:delta-f}--\eqref{eq:delta-f-end} leaves the Jacobi weight times a constant, so
orthogonality to \(P_0\) kills \(J\geq3\).  For the photon channel the
remaining polynomial has degree at most one, so orthogonality kills
\(P_{J-1}\) for \(J-1\geq2\).  The surviving \(J=2\) integrals are recorded in
Appendix~\ref{app:aligned}.  They are analytic functions of \(t\) in this
special slice and depend on how helicities with different \(m\) are assembled.

Appendix~\ref{app:aligned} also shows that the two formal \(J=2\) matrices
obey
\begin{equation}
 \comm{\K_2^+}{(\delta\A_{2,+}^{\rm formal})^{\TF}}\neq0\,,
 \quad
 \comm{\K_2^-}{(\delta\A_{2,-}^{\rm formal})^{\TF}}\neq0 \,.
 \label{eq:J2-noncommuting-summary}
\end{equation}
This is the direct difference from the RN case:
the RN tree-level matrix elements commute with
\(\K_J^P\) for every radiative \(J\), whereas the formal linear-spin
\(J=2\) matrix does not.

The axial mismatch comes entirely from the spin-dressed Thomson term:
multiplication by \((1-\cos\theta)\) leaks the spinless \(J=1\) polynomial into
\(J=2\), a sector absent from the spinless \(\ell\geq2\) theorem.
Appendix~\ref{app:aligned} gives the individual cancellations.

The nonzero commutators show only that the unchanged RN projectors fail on
this formal aligned \(J=2\) construction.  Its photon and graviton columns do
not form a common \((J,m)\) sector, while rotation couples neighboring
\(J\)'s and may require an \(a\)-dependent transformation.  This is therefore
not a KN separability test; that test requires the complete
fixed-\(m\) angular system and the curved-background equations.

\section{Discussion and conclusions}
\label{sec:discussion}
In this work, we investigated the extent to which photon-graviton mixing and its decoupling structure in charged-black-hole perturbation theory are encoded in flat-space on-shell scattering amplitudes. 
As part of this analysis, we computed the tree-level four-point amplitudes in Einstein--Maxwell theory minimally coupled to a heavy charged complex scalar using the spinor-helicity formalism. 
We organized elastic photon scattering, elastic graviton scattering, and photon-graviton conversion into a single species-space amplitude matrix and obtained results consistent with known Compton amplitudes, gravitational factorization relations, and low-frequency scattering cross sections.
In particular, we showed that a candidate constant channel transformation separating the electromagnetic and gravitational perturbation modes of a Reissner--Nordstr\"{o}m (RN) background can be extracted from the partial-wave matrix reconstructed from photon and graviton scattering off a heavy charged source. From this perspective, instead of searching directly for a change of variables in the differential equations, one asks whether the partial-wave amplitude and the curved-background potential possess the same spectral projectors. The mode-decoupling problem can thereby be formulated as an algebraic problem in channel space.
In the fixed-source limit, corresponding to the zeroth order of the recoil expansion, the trace-free part of the partial-wave matrix obtained from the flat-space tree amplitudes is proportional to the trace-free part of the  Moncrief coupling matrix \(\K_\ell^P\) appearing in the RN perturbation equations. Consequently, for every radiative multipole \(\ell\geq2\) and both parity sectors \(P=\pm1 \), we find
\begin{equation}
[\A_{\ell,0}^P,\K_\ell^P]=0 \,.
\end{equation}
The channel eigenspaces selected by the on-shell amplitudes therefore coincide with those defining the Moncrief variables. The corresponding mixing angle also agrees with the Moncrief angle, up to the relative phase convention for the two channels and an interchange of the eigenvectors. Thus, the channel direction characterizing the decoupling of RN perturbations can be identified from flat-space long-range scattering data before directly solving the curved-background perturbation equations.

The amount of information determined by the tree-level amplitudes must, however, be stated carefully. Through the first-Born correspondence, the amplitudes determine the channel-dependent, trace-free part of the leading \(r^{-3}\) term in the weak-field potential. They do not by themselves determine the channel-independent contribution, the higher-order terms in the \(1/r\) expansion, or the complete RN potential at finite radius. Higher-order coefficients in the weak-field expansion are related to higher post-Minkowskian information and, on the amplitude side, generally also involve the classical parts of loop amplitudes. Nevertheless, a finite number of loop orders need not suffice to reconstruct a potential valid at all radii. Such a reconstruction may require a resummation of infinitely many weak-field coefficients or additional nonperturbative input.
For this reason, in Sec.~\ref{sec:radial} we introduced the complete classical RN perturbation potentials as independent curved-background input. For each fixed \((\ell,P)\), we showed that their channel matrices close at every radius in the two-dimensional algebra
\begin{equation}
 \mathbf V_{\ell,s}(r)
 \in
 \Span\left\{
 \mathbf{1}_2,\K_\ell^P
 \right\}\,.
\end{equation}
This closure of the algebra implies that the constant projectors identified from the amplitudes diagonalize the complete RN radial system throughout the full radial domain. Our conclusion is therefore not that complete RN separability follows from flat-space amplitudes alone. Rather, the amplitudes select candidate projectors in the weak-field regime, while the algebra of the exact RN potentials independently establishes that the same projectors remain valid at every radius. The argument thus consists of two logically distinct steps.

Because the Moncrief variables separating the RN system are already known, the practical advantage of this method may not be immediately apparent in the RN case. For coupled wave systems in which the appropriate transformation is unknown, however, the present approach may provide a useful way to infer candidate channel projectors and decoupling variables directly from scattering amplitudes.

A natural example is the Kerr--Newman (KN) system, which carries both angular momentum and electric charge.
As a step toward a rotating charged source, we used a minimally coupled Dirac field as an extractor of the linear-in-spin coefficient. By first separating the spin-odd contribution from the spin-even recoil terms through spin reversal and only then taking the zero-recoil limit, we extracted the spin-representation-independent linear-spin coefficient associated with the \(g=2\) magnetic-dipole interaction. On the aligned kinematic slice, every entry of the complete tree-level species-space amplitude matrix factorizes into the corresponding scalar amplitude multiplied by a single common factor. This result allows the linear-spin term extracted from the Dirac amplitude to be promoted to a minimally coupled source carrying finite classical spin. It does not, however, determine the coefficients at quadratic or higher orders in spin.
A formal \(J=2\) projection on the aligned kinematic slice produces a linear-spin partial-wave matrix that does not commute with the RN Moncrief coupling matrix \(\K_2^P\). These statements apply only to the aligned kinematic slice and do not constitute a test of separability in KN spacetime. In a rotating background, different values of \(J\) can couple within a sector of fixed azimuthal number \(m\), and the required channel transformation may itself depend on the spin parameter. The noncommutativity of the formal \(J=2\) matrix therefore shows only that the unchanged RN projectors cannot be applied directly to the rotating system.

Our results demonstrate that on-shell amplitudes can efficiently identify the channel structure and candidate decoupling variables of a coupled wave system without reconstructing its complete curved-background equations. On the amplitude side, no information specific to the black-hole horizon is used directly. Instead, the mass and charge of the heavy source are matched to the asymptotic monopole data through Eq.~\eqref{eq:MQ}. In this sense, the extraction of channel projectors from amplitudes may be applicable beyond black-holes to more general long-range scattering systems with multiple asymptotic channels. Whether the candidate projectors remain valid at finite distances must nevertheless be tested independently using the complete wave operator of the system under consideration, just as in the RN analysis presented here.

An important direction for future work is to determine whether candidate decoupling variables can be constructed systematically from scattering amplitudes for general coupled wave systems, including systems unrelated to black-holes. For the KN system, it will be necessary to construct the complete angular-mode system coupled at fixed-\(m\) from the amplitude side and investigate its decoupling structure. The calculation of amplitude matrices through quadratic and higher orders in spin, together with a direct comparison with the KN perturbation equations, will also be important. For the RN system, it would be interesting to include the classical conservative contributions contained in the one-loop amplitudes and determine whether their trace-free partial-wave operators continue to lie in \(\Span\{\mathbf 1_2,\K_\ell^P\}\), or how their commutators with \(\K_\ell^P\) are modified. Such studies may clarify more generally the relation between algebraic structures appearing in scattering amplitudes and separability in coupled wave systems, including those arising in rotating black-hole backgrounds.

\acknowledgments

The authors would like to express their sincere gratitude to Katsuki Aoki for his important insights and many fruitful discussions.
This project began in earnest following his seminar at Chiba University and grew out of an attempt to extend the framework for describing black-hole microstates developed in Ref.~\cite{Akpinar:2026oni} to elastic coupled-channel scattering on a charged black-hole background.
His broad expertise in black-hole scattering amplitudes and the extensive discussions that followed the seminar played an important role in shaping this work.
In particular, these efforts led directly to the central question addressed here: whether flat-space on-shell amplitudes already encode the channel structure of black-hole perturbations.
K.T.\ would also like to express his deepest gratitude to T.K.\ for welcoming me into the Chiba University Particle Theory Group at an early stage of my studies,
introducing him to scattering amplitudes as a research subject, and teaching me the foundations of particle theory through seminars and numerous discussions. 
%K.T\ is especially grateful for T.K.'s continued guidance throughout this project, including his careful comments on the manuscript down to its finer details. 
K.T.\ also thank the members of the Chiba Particle Physics Group and gratefully acknowledge the Center for Frontier Science at Chiba University for providing an excellent academic environment and financial support.
The work of T.K.\ is supported by the JSPS Grant-in-Aid for Scientific Research Grant No.\,22K21347 and 25K07276.

%\clearpage
\onecolumngrid
\appendix

\section{Spinor-helicity construction of the scalar tree matrix}
\label{app:spinor-scalar}

\subsection{Conventions and three-point seeds}

We label the external legs according to the physical process $ 1(p)+2(k)\to 3(p')+4(k'),$
but perform the spinor-helicity construction in this appendix using
the all-outgoing convention.  The all-outgoing momenta are therefore assigned as
\begin{equation}
 p_1=-p,\qquad
 k_2=-k,\qquad
 p_3=p',\qquad
 k_4=k'.
\end{equation}
The prescription for returning to the physical in-out convention,
including the crossing of the helicity on leg $2$, is given after
Eq.~\eqref{eq:det-identity}.  All amplitudes displayed in the main text are already
written in the physical convention, with legs $1$ and $2$ incoming
and legs $3$ and $4$ outgoing.
Accordingly, momentum conservation and the on-shell conditions read
\begin{equation}
 p_1+k_2+p_3+k_4=0,\qquad
 p_1^2=p_3^2=m_\Phi^2,\qquad k_2^2=k_4^2=0,
 \label{eq:all-outgoing}
\end{equation}
and define
\begin{equation}
 D_2=\langle2|\boldsymbol{p}_1|2],\qquad
 D_4=\langle4|\boldsymbol{p}_1|4],\qquad
 s_{24}=\langle24\rangle[42].
 \label{eq:D34}
\end{equation}
We follow the conventions for the massive spinor-helicity formalism of Refs.~\cite{ArkaniHamed:2017jhn,Durieux:2019eor}.
Throughout this appendix $m\equiv m_\Phi$ and $q\equiv q_\Phi$. Squaring \(p_3=-p_1-k_2-k_4\) immediately gives
\begin{equation}
 D_2+D_4+s_{24}=0.
 \label{eq:Didentity}
\end{equation}
Our standard polarizations are
\begin{equation}
 \e_+^\mu(k;\xi)=\frac{\langle \xi|\sigma^\mu|k]}
 {\sqrt2\langle \xi k\rangle},\qquad
 \e_-^\mu(k;r)=-\frac{[\xi|\bar{\sigma}^\mu|k\rangle}
 {\sqrt2[\xi k]} ,
 \label{eq:polarizations}
\end{equation}
where $|\xi], |\xi\rangle$ are reference spinors.
The \(\sqrt2\) factors below follow from this normalization; little-group
covariance fixes the structures but not the coupling normalization. 

On three-point kinematics, introduce the standard massive
spinor-helicity variables \cite{ArkaniHamed:2017jhn}
\begin{equation}
 x=\frac{\langle\xi|\boldsymbol{p}_1|2]}{m\langle\xi2\rangle},
 \qquad
 \bar x=\frac{[\xi|\boldsymbol{p}_1|2\rangle}{m[\xi2]} .
 \label{eq:xvar}
\end{equation}
Minimal SQED and gravity give
\begin{align}
 \label{eq:3pt-seeds}
 \M_3(\Phi\bar\Phi\gamma^+)&=\sqrt2\,qm x,
 &\M_3(\Phi\bar\Phi\gamma^-)&=\sqrt2\,qm\bar x,\\
 \M_3(\Phi\bar\Phi g^+)&=\frac{\kappa m^2}{2}x^2,
 &\M_3(\Phi\bar\Phi g^-)&=\frac{\kappa m^2}{2}\bar x^2 .
 \label{eq:3pt-seeds-end}
\end{align}

\subsection{The gauge kernel and local completion}

The two internal scalar momenta may be chosen as
\(P_2=p_1+k_2\) and \(P_4=p_1+k_4\).  Their inverse propagators are
\begin{equation}
 P_2^2-m^2=D_2,\qquad P_4^2-m^2=D_4.
\end{equation}
Suppressing the common \(S\)-matrix phase, tree factorization therefore
requires
\begin{align}
 \left.D_2\,q^2\mathcal C_0^{h_2h_4}\right|_{D_2=0}
 &=
 \M_3(p_1,2^{h_2},-P_2)\,
 \M_3(P_2,4^{h_4},p_3),\\
 \left.D_4\,q^2\mathcal C_0^{h_2h_4}\right|_{D_4=0}
 &=
 \M_3(p_1,4^{h_4},-P_4)\,
 \M_3(P_4,2^{h_2},p_3).
\end{align}
The helicity superscripts in what follows are ordered as $(h_2,h_4)$.
Substitution of the seeds in \eqref{eq:3pt-seeds}--\eqref{eq:3pt-seeds-end}, followed by elimination of
the reference spinor \(\xi\) on each pole, fixes the little-group covariant
numerators
\begin{align}
 \mathcal N^{++}&=-2m^2[24]^2,&
 \mathcal N^{+-}&=2\langle4|\boldsymbol{p}_1|2]^2,\\
 \mathcal N^{-+}&=2\langle2|\boldsymbol{p}_1|4]^2,&
 \mathcal N^{--}&=-2m^2\langle24\rangle^2.
\end{align}
Explicitly, the two residues are
\begin{equation}
 \left.D_2\mathcal C_0^{h_2h_4}\right|_{D_2=0}
 =\left.\frac{\mathcal N^{h_2h_4}}{D_4}\right|_{D_2=0},
 \qquad
 \left.D_4\mathcal C_0^{h_2h_4}\right|_{D_4=0}
 =\left.\frac{\mathcal N^{h_2h_4}}{D_2}\right|_{D_4=0}.
\end{equation}
Factorization determines these residues but, by itself, allows a pole-free
polynomial.  
At the two-derivative order the required polynomial is the
SQED seagull.  After crossing to the physical process, the three QED
diagrams in Fig.~\ref{fig:gamma-tree} read
\begin{align}
 \M_s&=2q^2
 \frac{(\e_i\cdot p)(\e_f^*\cdot p')}{p\cdot k},&
 \M_u&=-2q^2
 \frac{(\e_i\cdot p')(\e_f^*\cdot p)}{p\cdot k'},\\
 \M_{\rm ct}&=-2q^2\,\e_i\cdot\e_f^* .
\end{align}
The coefficient of the contact term is fixed, rather than optional.  Indeed,
under \(\e_i\to k\), on-shell momentum conservation gives
\begin{equation}
 \left.(\M_s+\M_u)\right|_{\e_i\to k}
 =2q^2\,k\cdot\e_f^*,\qquad
 \left.\M_{\rm ct}\right|_{\e_i\to k}
 =-2q^2\,k\cdot\e_f^*,
\end{equation}
and the outgoing Ward identity works in the same way.  
Combining the
gauge-invariant expression over the common denominator \(D_2D_4\) gives the
four helicity components
\begin{align}
 \label{eq:C-spinor}
 \mathcal C_0^{++}
 &=-\frac{2m^2[24]^2}{D_2D_4},&
 \mathcal C_0^{+-}
 &=\frac{2\langle4|\boldsymbol{p}_1|2]^2}{D_2D_4},\\
 \mathcal C_0^{-+}
 &=\frac{2\langle2|\boldsymbol{p}_1|4]^2}{D_2D_4},&
 \mathcal C_0^{--}
 &=-\frac{2m^2\langle24\rangle^2}{D_2D_4}.
 \label{eq:C-spinor-end}
\end{align}
The seagull is not an extra optional term: it is the polynomial required to
complete the pole ansatz into \eqref{eq:C-spinor}--\eqref{eq:C-spinor-end}.  A compact two-step gluing
proof uses
\begin{equation}
 \langle4|\boldsymbol{p}_1|2]\langle2|\boldsymbol{p}_1|4]
 =D_2D_4+m^2s_{24},
 \label{eq:det-identity}
\end{equation}
together with \eqref{eq:Didentity}. 
Apply the crossing map
$p_1=-p,~
 k_2=-k,~
 p_3=p',~
 k_4=k'$
together with
$|-k\rangle=i|k\rangle$ and $|-k]=i|k]$.
Then, $h_2=-h_i, h_4=h_f.$
Thus physical $(h_f,h_i)=(+,+)$ comes from the all-outgoing
component $(h_2,h_4)=(-,+)$, whereas physical $(+,-)$ comes
from $(+,+)$. After stripping the standard external helicity phases, the
needed numerator maps are
\begin{equation}
 \langle2|\boldsymbol{p}_1|4]^2\to \Xi_\Phi,\qquad
 [24]^2\to  t.
\end{equation}
This displays explicitly how the seagull contribution contained in
\(\Xi_\Phi=m^2t-D_sD_u\) is present in the compact spinor expression.

Factorization fixes the pole residues but does not determine local terms
without poles.  The leading such terms appear at dimension six in the
same-helicity sector and at dimension eight in the opposite-helicity
sector.  Their coefficients are independent Wilson coefficients and are
not fixed by the three-point amplitudes.  Throughout this work,
minimal means that these additional local terms are set to zero.
Higher-derivative operators may modify the contact contributions to
individual partial waves, but not the pole residues fixed by
factorization.

\subsection{Gravitational dressing}
The minimal \(\gamma\gamma\) graviton pole has
\begin{align}
 \label{eq:h-pole-spinor}
 \left(\M_{\gamma\gamma}^{(g{\rm -pole})}\right)^{++}
 &=\left(\M_{\gamma\gamma}^{(g{\rm -pole})}\right)^{--}=0,\\
 \left(\M_{\gamma\gamma}^{(g{\rm -pole})}\right)^{+-}
 &=-\frac{\kappa^2}{4}
 \frac{\langle4|\boldsymbol{p}_1|2]^2}{s_{24}},&
 \left(\M_{\gamma\gamma}^{(g{\rm -pole})}\right)^{-+}
 &=-\frac{\kappa^2}{4}
 \frac{\langle2|\boldsymbol{p}_1|4]^2}{s_{24}} .
 \label{eq:h-pole-spinor-end}
\end{align}
The zero in the first line is a statement about the two-derivative theory;
operators such as \(RF^2\) may generate same-helicity contact terms.

The gauge-invariant half-dressings can be written
\begin{align}
 \label{eq:N-spinor}
 N_4^+&=-\frac{\langle2|\boldsymbol{p}_1|4]}{\sqrt2\langle24\rangle},&
 N_4^-&=\frac{[2|\boldsymbol{p}_1|4\rangle}{\sqrt2[24]},\\
 N_2^+&=-\frac{\langle4|\boldsymbol{p}_1|2]}{\sqrt2\langle24\rangle},&
 N_2^-&=\frac{[4|\boldsymbol{p}_1|2\rangle}{\sqrt2[24]} .
 \label{eq:N-spinor-end}
\end{align}
The Kawai--Lewellen--Tye (KLT) kernel is
\begin{equation}
 R=\frac{D_2D_4}{2s_{24}} .
 \label{eq:R-spinor}
\end{equation}
Thus the all-outgoing species matrix is
\begin{equation}
 {\mathbf M}_4=
 \begin{pmatrix}
 q^2\mathcal C_0+\M_{\gamma\gamma,g}
 &\dfrac{\kappa q}{2}N_4\mathcal C_0\\[2mm]
 \dfrac{\kappa q}{2}N_2\mathcal C_0
 &\dfrac{\kappa^2}{8}R\mathcal C_0^2
 \end{pmatrix}.
 \label{eq:M4-spinor}
\end{equation}
Here the species labels of the all-outgoing amplitudes refer to legs (2) and (4), respectively. Upon crossing leg (2) to the initial state, these become the initial and final species labels, \(s_2\to s_i\) and \(s_4\to s_f\). The holomorphic weight of a given graviton leg selects one half-dressing.
Using both \(N_2N_4\) and \(R\) would double-dress the amplitude and spoil
factorization.

Equations~\eqref{eq:all-outgoing}--\eqref{eq:M4-spinor} are exact at finite
\(m\) and include recoil; no expansion in \(\omega/m\) has been made here.
The replacement \eqref{eq:MQ} is applied only after crossing and wave
normalization.

\section{Projection details and the radial Born matching}
\label{app:projection}
\subsection{Fixed-source projection}
Let \(I,J\in\{\gamma,g\}\) label the initial and final species, respectively. The parity-projected partial-wave matrix in the helicity convention is
\begin{equation}
 \left(\A_{\ell}^{P,\mathrm{hel}}\right)_{JI}
 =\int_{-1}^{1}\dd z\,
\left[
d^\ell_{s_J,s_I}f^{JI}_{+s_J,+s_I}
+P\eta_I d^\ell{s_J,-s_I}f^{JI}_{+s_J,-s_I} .
\right]
 \label{eq:parity-projection-detail}
\end{equation}
Following the rephasing introduced after
Eq.~\eqref{eq:parity-state}, we express this matrix as
\begin{equation}
 \A_\ell^P
 =\mathbf U_P\A_{\ell}^{P,\mathrm{hel}}\mathbf U_P,
 \qquad
 \mathbf U_P\equiv\operatorname{diag}(P,1),
\end{equation}
where the channels are ordered as \((\gamma,g)\). Time-reversal invariance implies that the matrix is symmetric.
\begin{equation}
    \left(\A_\ell^P\right)^{\mathsf T}=\A_\ell^P
\end{equation}
At fixed \(\ell\), we suppress the \(\ell\) label on the matrix elements and coefficients. We parameterize the two independent components of
\(\left(\A_\ell^P\right)^{\TF}\) by \(B_P\) and \(D_P\)
\begin{equation}
 A_{\gamma g}^P=A_{g\gamma}^P\equiv P Q\,B_P,
 \qquad
 A_{\gamma\gamma}^P-A_{gg}^P\equiv M\,D_P.
 \label{eq:BD-definitions}
\end{equation}
It follows algebraically that
\begin{equation}
 \left(\A_\ell^P\right)^{\TF}
 =\begin{pmatrix}
 MD_P/2&PQB_P\\
 PQB_P&-MD_P/2
 \end{pmatrix}.
 \label{eq:BD-matrix}
\end{equation}
Equation~\eqref{eq:BD-definitions} defines \(B_P\) and \(D_P\), while
Eq.~\eqref{eq:BD-matrix} follows directly from that definition. The
Wigner-\(d\) integrals below determine these coefficients and establish the nontrivial relation between them.

We use the following convention for the Wigner \(d\) functions:
\begin{equation}
 \begin{aligned}
 d^\ell_{m'm}(\theta)
 &=
 \sqrt{
  (\ell+m)!(\ell-m)!
  (\ell+m')!(\ell-m')!
 }
 \\
 &\quad\times
 \sum_r
 \frac{
  (-1)^{m'-m+r}
  \cth^{2\ell+m-m'-2r}
  \sth^{m'-m+2r}
 }{
  (\ell+m-r)!\,r!\,
  (m'-m+r)!\,
  (\ell-m'-r)!
 } .
 \end{aligned}
 \label{eq:wigner-finite}
\end{equation}
where only terms with nonnegative factorials are retained. Let \( z\equiv\cos\theta,~u=\cth^2=(1+z)/2\),~ Then
\begin{equation}
 \int_{-1}^{1}\dd z\,\cth^{2\alpha}\sth^{2\beta}
 =2\int_0^1\dd u\,u^\alpha(1-u)^\beta
 =2\mathrm B(\alpha+1,\beta+1),
 \label{eq:beta-reduction}
\end{equation}
where \(\mathrm B\) denotes the Euler beta function.
For the photon-graviton conversion channel, we factor out the common
charge \(Q\) from
Eqs.~\eqref{eq:rn-wave}--\eqref{eq:rn-wave-end} and define the
helicity-nonflip and helicity-flip integrals by
\begin{align}
 I_{\rm nonflip}
 &=\int_{-1}^{1}\dd z\,
 d^\ell_{1,2}(\theta)\frac{\cth^3}{\sth},
 \label{eq:I-definitions}\\
 I_{\rm flip}
 &=-\int_{-1}^{1}\dd z\,
 d^\ell_{1,-2}(\theta)\cth\sth,\\
 B_P&=I_{\rm nonflip}+P I_{\rm flip},
 \qquad P=\pm1.
 \label{eq:I-definitions-end}
\end{align}
Let
\(\mathcal N_{\ell,12}\equiv
\sqrt{(\ell+2)!(\ell-2)!(\ell+1)!(\ell-1)!}\).
Substituting Eq.~\eqref{eq:wigner-finite} into the two integrals above
and applying Eq.~\eqref{eq:beta-reduction} term by term, we obtain
\begin{align}
 I_{\rm nonflip}
 &=2\mathcal N_{\ell,12}
 \sum_{r=1}^{\ell-1}
 \frac{(-1)^{r-1}\mathrm B(\ell+3-r,r)}
 {(\ell+2-r)!\,r!\,(r-1)!\,(\ell-1-r)!},
 \label{eq:I-beta-sums}\\
 I_{\rm flip}
 &=2\mathcal N_{\ell,12}
 \sum_{r=0}^{\ell-2}
 \frac{(-1)^r\mathrm B(\ell-r,r+3)}
 {(\ell-2-r)!\,r!\,(r+3)!\,(\ell-1-r)!}.
 \label{eq:I-beta-sums-end}
\end{align}
Using \(\mathrm B(p,q)=(p-1)!(q-1)!/(p+q-1)!\) and evaluating the finite sums, we find
\begin{equation}
 I_{\rm nonflip}=\frac{2}{\sqrt\Lambda},
 \qquad
 I_{\rm flip}
 =\frac{4}{(\Lambda+2)\sqrt\Lambda}.
 \label{eq:I-closed}
\end{equation}
Therefore,
\begin{equation}
 B_+
 =\frac{2(\Lambda+4)}
 {(\Lambda+2)\sqrt\Lambda},
 \qquad
 B_-
 =\frac{2\Lambda}{(\Lambda+2)\sqrt\Lambda}
 =\frac{2\sqrt\Lambda}{\Lambda+2}.
 \label{eq:B-closed}
\end{equation}
For the diagonal channels, we use the Jacobi-polynomial representations
\begin{equation}
 \begin{aligned}
 d^\ell_{11}(\theta)
 &=\cth^2P_{\ell-1}^{(0,2)}(z),\\
 d^\ell_{22}(\theta)
 &=\cth^4P_{\ell-2}^{(0,4)}(z),\\
 d^\ell_{2,-2}(\theta)
 &=\sth^4P_{\ell-2}^{(4,0)}(z).
 \end{aligned}
 \label{eq:diagonal-Jacobi}
\end{equation}
Combining these expressions with the angular factors in
Eqs.~\eqref{eq:rn-wave}--\eqref{eq:rn-wave-end}, we define
\begin{align}
 \mathcal J_\gamma(\delta)
 &=\int_{-1}^{1-\delta}\dd z\,
 \frac{(1+z)^2}{2(1-z)}
 P_{\ell-1}^{(0,2)}(z),
 \label{eq:J-integrals}\\
 \mathcal J_{g,\rm nonflip}(\delta)
 &=\int_{-1}^{1-\delta}\dd z\,
 \frac{(1+z)^4}{8(1-z)}
 P_{\ell-2}^{(0,4)}(z),\\
 \mathcal J_{g,\rm flip}
 &=\int_{-1}^{1}\dd z\,
 \frac{(1-z)^3}{8}
 P_{\ell-2}^{(4,0)}(z),
 \label{eq:J-integrals-end}
\end{align}
where \(\delta>0\) regulates the forward singularity at \(z=1\).

As \(z\to1\), the first two integrands have the same singular behavior, \(2/(1-z)\). Regulating both diagonal channels with the same cutoff \(z\leq1-\delta\), we obtain
\begin{align}
 \mathcal J_\gamma(\delta)
 &=2\log\frac{2}{\delta}
 +\int_{-1}^{1}\dd z
 \left[
 \frac{(1+z)^2P_{\ell-1}^{(0,2)}(z)}
 {2(1-z)}
 -\frac{2}{1-z}
 \right]
 +\Order(\delta),
 \label{eq:J-subtractions}\\
 \mathcal J_{g,\rm nonflip}(\delta)
 &=2\log\frac{2}{\delta}
 +\int_{-1}^{1}\dd z
 \left[
 \frac{(1+z)^4P_{\ell-2}^{(0,4)}(z)}
 {8(1-z)}
 -\frac{2}{1-z}
 \right]
 +\Order(\delta).
 \label{eq:J-subtractions-end}
\end{align}
Using
\begin{equation}
 P_n^{(\alpha,\beta)}(z)
 =
 \frac{(\alpha+1)_n}{n!}
 {}_2F_1\left(
 -n,n+\alpha+\beta+1;
 \alpha+1;\frac{1-z}{2}
 \right),
\end{equation}
where \({}_2F_1\) denotes the Gauss hypergeometric function and
\((a)_n\) denotes the Pochhammer symbol. Since \(n\) is a
nonnegative integer, the hypergeometric series terminates, yielding
a finite-sum representation of each Jacobi polynomial. After setting \(u=(1+z)/2\), the nonconstant
terms can be integrated term by term using
Eq.~\eqref{eq:beta-reduction}. The constant term contains the common
forward singularity and gives \(2\log(2/\delta)\). Evaluating the
remaining finite sums, we obtain
\begin{align}
 \mathcal J_\gamma(\delta)
 &=2\log\frac{2}{\delta}
 -4H_\ell+\frac{2}{\Lambda+2}
 +\Order(\delta),
 \label{eq:intermediate-proj}\\
 \mathcal J_{g,\rm nonflip}(\delta)
 &=\mathcal J_\gamma(\delta)
 +\frac{6}{\Lambda}
 +\Order(\delta),\\
 \mathcal J_{g,\rm flip}
 &=\frac{12}{(\Lambda+2)\Lambda},
 \label{eq:intermediate-proj-end}
\end{align}
where \(H_\ell=\sum_{r=1}^{\ell}r^{-1}\). The diagonal coefficient defined in Eq.~\eqref{eq:BD-definitions} is therefore
\begin{equation}
 D_P
 =\lim_{\delta\to0}
 \left[
 \mathcal J_\gamma(\delta)
 -\mathcal J_{g,\rm nonflip}(\delta)
 -P\mathcal J_{g,\rm flip}
 \right].
 \label{eq:D-definitions}
\end{equation}
We consequently find
\begin{align}
 D_+
 &=-\frac{6}{\Lambda}
 -\frac{12}{(\Lambda+2)\Lambda}
 =-\frac{6(\Lambda+4)}
 {(\Lambda+2)\Lambda},
 \label{eq:D-closed}\\
 D_-
 &=-\frac{6}{\Lambda}
 +\frac{12}{(\Lambda+2)\Lambda}
 =-\frac{6}{\Lambda+2}.
 \label{eq:D-closed-end}
\end{align}
The photon and graviton helicity-nonflip integrals contain the same
forward term \(2\log(2/\delta)\). This term therefore cancels in the
difference defining \(D_P\). At the matrix level, it appears with the
same coefficient in the \(\gamma\gamma\) and \(gg\) entries and is thus
proportional to \(\mathbf{1}_2\). It consequently drops out of
\(\left(\A_\ell^P\right)^{\TF}\).

The parity-dependent coefficients can be written uniformly as
\begin{equation}
 B_P
 =
 \frac{2(\Lambda+2+2P)}
 {(\Lambda+2)\sqrt{\Lambda}},
 \qquad
 D_P
 =
 -\frac{6(\Lambda+2+2P)}
 {(\Lambda+2)\Lambda}.
\end{equation}
Hence
\begin{equation}
 B_P=-\frac{\sqrt{\Lambda}}{3}D_P.
\end{equation}
Using
\begin{equation}
 \left(\K_\ell^P\right)^{\TF}
 =
 \begin{pmatrix}
 -3M & 2P Q\sqrt{\Lambda}\\
 2P Q\sqrt{\Lambda} & 3M
 \end{pmatrix},
\end{equation}
we obtain
\begin{equation}
 \left(\A_\ell^P\right)^{\TF}
 =
 -\frac{D_P}{6}
 \left(\K_\ell^P\right)^{\TF}.
 \label{eq:BD-key}
\end{equation}
This establishes
Eqs.~\eqref{eq:RN-tree-result}--
\eqref{eq:RN-tree-result-end}.
Thus, for every \(\ell\geq2\) and \(P=\pm1\), we have shown directly
from the Wigner-\(d\) projections that the fixed-source partial-wave
matrix and the RN  Moncrief coupling matrix are diagonalized by the same rotation.

\subsection{First recoil coefficient}

We now derive Proposition~\ref{prop:RN-recoil}. In the
center-of-mass frame, we choose
\begin{align}
 p^\mu&=(E,0,0,-\omega),&
 k^\mu&=\omega(1,0,0,1),\\
 p'^\mu&=(E,-\omega\sin\theta,0,-\omega\cos\theta),&
 k'^\mu&=\omega(1,\sin\theta,0,\cos\theta),
\end{align}
where \(z\equiv\cos\theta\),
\(E=\sqrt{m_\Phi^2+\omega^2}\), and
\(E_{\mathrm{cm}}=E+\omega=\sqrt{s}\).
From these momenta, we obtain the invariants entering
Eq.~\eqref{eq:tree-matrix}:
\begin{equation}
 \begin{gathered}
 D_s=2\omega E_{\rm cm},\qquad
 D_u=-2\omega(E+\omega z),\qquad
 t=-2\omega^2(1-z),\\
 \Xi_\Phi=2\omega^2(1+z)E_{\rm cm}^2,\qquad
 R=\frac{E_{\rm cm}(E+\omega z)}{1-z}.
 \end{gathered}
 \label{eq:cm-recoil-invariants}
\end{equation}
Using \(f_{JI}^{h_fh_i}
 =\mathcal M_{JI}^{h_fh_i}/(8\pi E_{\mathrm{cm}})\) and the parameter relations in Eq.~\eqref{eq:MQ}, we obtain the helicity amplitudes with their recoil dependence kept unexpanded
\begin{align}
 \label{eq:exact-recoil-wave}
 f_{\gamma\gamma}^{++}
 &=M\frac{E_{\rm cm}}{m_\Phi}\frac{\cth^2}{\sth^2}-\frac{Q^2}{M}\frac{m_\Phi}{E+\omega z}\cth^2,
 &f_{\gamma\gamma}^{+-}&=-\frac{Q^2}{M}\frac{m_\Phi^3}{E_{\rm cm}^2(E+\omega z)}\sth^2 ,\\
 f_{gg}^{++}
 &=M\frac{E_{\rm cm}^2}{m_\Phi (E+\omega z)}
   \frac{\cth^4}{\sth^2},
 &f_{gg}^{+-}
 &=M\frac{m_\Phi^3}{E_{\rm cm}^2(E+\omega z)}\sth^2,\\
 f_{\gamma g}^{++}
 &=Q\frac{E_{\rm cm}}{E+\omega z}\frac{\cth^3}{\sth},
 &f_{\gamma g}^{+-}
 &=-Q\frac{m_\Phi^2}{E_{\rm cm}(E+\omega z)}\cth\sth .
 \label{eq:exact-recoil-wave-end}
\end{align}
The Thomson contributions contained in the photon amplitudes above are
\begin{equation}
 f_{\rm Th.}^{++}
 =-\frac{Q^2}{M}\frac{m_\Phi}{E+\omega z}\cth^2,\qquad
 f_{\rm Th.}^{+-}
 =-\frac{Q^2}{M}\frac{m_\Phi^3}{E_{\rm cm}^2(E+\omega z)}\sth^2 .
 \label{eq:exact-recoil-thomson}
\end{equation}
Separating the long-range pole and Thomson contributions and expanding \(E/m_\Phi=1+\Order(\epsilon_{\rm r}^2)\) and
\(E_{\rm cm}/m_\Phi=1+\epsilon_{\rm r}+\Order(\epsilon_{\rm r}^2).\) we obtain Eqs.~
\eqref{eq:rn-wave}--\eqref{eq:rn-wave-end} and \eqref{eq:thomson}--\eqref{eq:thomson-end}.

For each projection integral \(X\), we write
\begin{equation}
 X(\epsilon_{\mathrm r})
 =X^{(0)}
 +\epsilon_{\mathrm r}X^{(\mathrm r)}
 +\Order(\epsilon_{\mathrm r}^2).
\end{equation}

At order \(\epsilon_{\mathrm r}\), the two conversion-channel helicity amplitudes have the same angular factor,
\(2\cth^3\sth\), but their partial-wave projections involve
\(d^\ell_{1,2}\) and \(d^\ell_{1,-2}\), respectively.
Using the finite-sum representation in
Eq.~\eqref{eq:wigner-finite} and the beta-function identity in Eq.~\eqref{eq:beta-reduction}, we obtain
\begin{align}
 \label{eq:recoil-conversion-integrals}
 I_{\mathrm{nonflip}}^{({\rm r})}
 &=\int_{-1}^{1}\dd z\,
   d^\ell_{1,2}\,2\cth^3\sth
   =\frac25\delta_{\ell2},\\
 I_{\rm flip}^{({\rm r})}
 &=\int_{-1}^{1}\dd z\,
   d^\ell_{1,-2}\,2\cth^3\sth
   =-\frac8{(\Lambda+2)\sqrt{\Lambda}}+\frac25\delta_{\ell2}.
 \label{eq:recoil-conversion-integrals-end}
\end{align}
For the diagonal entries the same calculation yields
\begin{align}
 \label{eq:recoil-diagonal-integrals}
 \mathcal J_\gamma^{({\rm r})}
 &=\mathcal J_\gamma^{(0)},\\
 \mathcal J_{g,\mathrm{nonflip}}^{({\rm r})}
 &=\mathcal J_{g,\mathrm{nonflip}}^{(0)}+\frac45\delta_{\ell2},\\
 \mathcal J_{g,\mathrm{flip}}^{({\rm r})}
 &=-\mathcal J_{g,\mathrm{flip}}^{(0)}
   -\frac{24}{(\Lambda+2)\Lambda}+\frac45\delta_{\ell2}.
 \label{eq:recoil-diagonal-integrals-end}
\end{align}
The common forward logarithm in the first two lines again cancels from their
difference.  Therefore
\begin{align}
 \label{eq:recoil-BD}
 B_P^{({\rm r})}
 &=-\frac{8P}{(\Lambda+2)\sqrt{\Lambda}}
   +\frac25(1+P)\delta_{\ell2},\\
 D_P^{({\rm r})}
 &=D_{-P}^{(0)}+\frac{24P}{(\Lambda+2)\Lambda}
   -\frac45(1+P)\delta_{\ell2}.
 \label{eq:recoil-BD-end}
\end{align}
For any real symmetric two-channel matrix in the parameterization
\eqref{eq:BD-matrix},
\begin{equation}
 \comm{\A_\ell^P}{\K_\ell^P}
 =2P M Q\left(\sqrt{\Lambda}D_P+3B_P\right)\Jmat .
 \label{eq:BD-commutator-general}
\end{equation}
The zeroth-order bracket vanishes by \eqref{eq:BD-key}; inserting
\eqref{eq:recoil-BD}--\eqref{eq:recoil-BD-end} into the first-order bracket gives
\eqref{eq:RN-recoil-rho}.

For completeness, the recoil pieces of \eqref{eq:thomson}--\eqref{eq:thomson-end} require only
\begin{equation}
 \int_{-1}^{1}\dd z\,d^\ell_{11}\,z\cth^2
 =\frac15\delta_{\ell2},\qquad
 \int_{-1}^{1}\dd z\,d^\ell_{1,-1}\,
 \sth^2(2+z)=\frac15\delta_{\ell2}.
 \label{eq:recoil-thomson-integrals}
\end{equation}
Because \(\eta_\gamma=-1\), their parity combination adds
\begin{equation}
 \Delta D_P^{({\rm r}),{\rm Th.}}
 =\frac{Q^2}{M^2}\frac{1-P}{5}\delta_{\ell2}.
\end{equation}
Equation~\eqref{eq:BD-commutator-general} then gives
\eqref{eq:RN-recoil-full}.

\subsection{Radial first-Born matching}
Finally, consider the radial equation
\begin{equation}
\left[
 \left( -\frac{\dd^2}{\dd r_*^2}
 +\frac{\Lambda+2}{r^2}\right)\mathbf 1_2
 +\frac{\mathbf W}{r^3}
\right]\boldsymbol{\psi}
=\omega^2\boldsymbol{\psi} \,.
\end{equation}
At this first order in the weak-field tail, the free radial operator uses
\(r_*=r+\Order(M\log(r/M))\), so replacing \(r_*\) by \(r\) inside the Born
integral changes only higher orders in the background strength.  This is why
the integration measure below is \(\dd r\); it is unrelated to a
low-frequency Taylor expansion.
Using the free Riccati--Bessel wave
\(\hat j_\ell(\omega r)=\omega rj_\ell(\omega r)\), its first Born phase is
\begin{align}
 \delta_\ell^{(1)}
 &=-\frac1\omega\int_0^\infty\dd r\,
 \hat j_\ell(\omega r)\frac{\mathbf W}{r^3}\hat j_\ell(\omega r)\notag\\
 &=-\omega \mathbf W\int_0^\infty\frac{\dd x}{x^3}\hat j_\ell(x)^2\notag\\
 &=-\omega \mathbf W\int_0^\infty\frac{\dd x}{x}j_\ell(x)^2
 =-\frac{\omega \mathbf W}{2(\Lambda+2)},
 \label{eq:born-steps}
\end{align}
where \(x=\omega r\) and
\begin{equation}
 \int_0^\infty\frac{\dd x}{x}j_\ell(x)^2
 =\frac{1}{2\ell(\ell+1)}=\frac1{2(\Lambda+2)}.
 \label{eq:bessel}
\end{equation}
The identity holds for \(\ell\geq1\).  Indeed,
\(j_\ell(x)^2/x=\Order(x^{2\ell-1})\) at the origin and
\(\Order(x^{-3})\) at infinity, so the integral is convergent throughout
the radiative range \(\ell\geq2\) used here.
Since \(S_\ell=\mathbf{1}_2+2\ii\delta_\ell+\cdots
=\mathbf{1}_2+\ii\omega\A_\ell+\cdots\), one has
\begin{equation}
 \A_\ell^{(1)}=-\frac{\mathbf W}{\Lambda+2}.
 \label{eq:born-dictionary-detail}
\end{equation}
For RN,
\((\mathbf W_{\rm o})^{\TF}=-(\K_\ell^-)^{\mathrm{TF}}\) and
\((\mathbf W_{\rm e})^{\TF}=-(1+4/\Lambda)(\K_\ell^+)^{\mathrm{TF}}\). Hence, the first born contribution is
\begin{equation}
 \left(\A_{\ell,{0}}^{-}\right)^{\TF}
 =\frac1{\Lambda+2}(\K_\ell^-)^{\mathrm{TF}},\qquad
 \left(\A_{\ell,{0}}^{+}\right)^{\TF}
 =\frac{\Lambda+4}{(\Lambda+2)\Lambda}(\K_\ell^+)^{\mathrm{TF}},
 \label{eq:born-RN-detail}
\end{equation}
which reproduces both lines of \eqref{eq:RN-tree-result}--\eqref{eq:RN-tree-result-end}, including their relative normalization.

\section{Fermion calculation in Dirac form}
\label{app:fermion}

\subsection{Low-energy Dirac reduction}
In this appendix, \(m\equiv m_\Psi\), \(q\equiv q_\Psi\), and \(\epsilon_{\rm r}\equiv\omega/m_\Psi\) with \(\hbar=1\).
Let the incoming source be at rest and choose
\begin{align}
 k^\mu&=\omega(1,0,0,1),\notag\\
 k'^\mu&=\omega'(1,\sin\theta,0,\cos\theta),\notag\\
 \omega'&=\frac{\omega}{1+(\omega/m)(1-\cos\theta)} .
 \label{eq:compton-kinematics}
\end{align}
We keep exact recoil until after separating the spin-even and spin-odd
pieces.  For spin eigenstates along \(\hat z\), define
\begin{equation}
 \M_{\rm av}=\frac{\M_\uparrow+\M_\downarrow}{2},
 \qquad
 \M_{\rm spin}=\frac{\M_\uparrow-\M_\downarrow}{2}.
 \label{eq:spin-average}
\end{equation}
More precisely, at fixed photon helicities \(h_i,h_f\),
\begin{equation}
 \M_\uparrow^{h_fh_i}\equiv
 \M_{s_f=+\hat z,s_i=+\hat z}^{h_fh_i},
 \qquad
 \M_\downarrow^{h_fh_i}\equiv
 \M_{s_f=-\hat z,s_i=-\hat z}^{h_fh_i}.
\end{equation}
Thus, if the amplitude in the source-spin space is
\(\M=A_0\mathbf{1}_2+\bm A\cdot\bm\sigma\), Eq.~\eqref{eq:spin-average}
selects \(A_0\) and \(A_z\).  It is an amplitude-level projection under
reversal of the source polarization, not an unpolarized average of
\(\lvert\M\rvert^2\), and it does not test the off-diagonal source-spin
amplitudes governed by \(A_x,A_y\).

Using the Gordon identity,
\begin{equation}
 \bar u(p')\gamma^\mu u(p)
 =\bar u(p')\left[
 \frac{(p'+p)^\mu}{2m}
 +\frac{\ii\sigma^{\mu\nu}(p'-p)_\nu}{2m}\right]u(p),
 \label{eq:Gordon}
\end{equation}
the convective term is spin even while the Pauli term is odd under
\((\uparrow)~\leftrightarrow~(\downarrow)\).  The reduction can be made explicit as
follows.  Put
\(\hat{\bm n}'=(\sin\theta,0,\cos\theta)\), and choose
\begin{equation}
 \bm\e_i=\frac{(1,\ii h_i,0)}{\sqrt2},
 \qquad
 \bm\e_f^*=\frac{(\cos\theta,-\ii h_f,-\sin\theta)}{\sqrt2}.
\end{equation}
Boldface quantities in this subsection are spatial three-vectors, and their
dot denotes the Euclidean contraction
\(\bm v\cdot\bm w=\sum_{i=1}^3v^iw^i\).
All unbolded four-vector contractions continue to use the mostly-minus
Minkowski metric.  The apparent sign difference between the two conventions
is therefore intentional.
Let \(n^\mu=(1,\hat{\bm z})\), \(n'^\mu=(1,\hat{\bm n}')\), and denote
the electric charge by the same \(q\) as in the main text.  Since
\(p\cdot\e_i=p\cdot\e_f^*=0\), \(\slashed p\,u(p)=m u(p)\), and
\(p\cdot k=m\omega\), \(p\cdot k'=m\omega'\), the two pole diagrams
combine into
\begin{equation}
 \M=\frac{q^2}{2m}\,
 \bar u(p')\left[
 \slashed{\e}_f^*\slashed n\slashed{\e}_i+
 \slashed{\e}_i\slashed n'\slashed{\e}_f^*
 \right]u(p).
\end{equation}
Writing \(\bm\Delta=\bm k-\bm k'
=\omega[\hat{\bm z}-\hat{\bm n}'
+\Order (\epsilon_{\rm r})]\), the canonical
spinors are
\begin{align}
 u(p)&=\sqrt{2m}\begin{pmatrix}\chi\\0\end{pmatrix},\\
 \bar u(p')&=\sqrt{2m}\left[
 \left(
 \chi^\dagger,-\chi^\dagger
 \frac{\bm\sigma\cdot\bm\Delta}{2m}\right)
 +\Order (\epsilon_{\rm r}^2)\right].
\end{align}
For \(\bm e_{\rm out}=\bm\e_f^*\) and
\(\bm e_{\rm in}=\bm\e_i\), Pauli reduction then gives
\begin{align}
 \M_{s_fs_i}^{h_fh_i}
 &=q^2\chi_{s_f}^\dagger
 \left[2\bm e_{\rm out}\cdot\bm e_{\rm in}\,\mathbf{1}_2
 +\frac{\epsilon_{\rm r}}{2}\mathcal K_{\rm Pauli}
 +\Order (\epsilon_{\rm r}^2)\right]\chi_{s_i},\\
 \mathcal K_{\rm Pauli}
 &=\bm\sigma\cdot(\hat{\bm z}-\hat{\bm n}')\,\mathcal Q_{\rm Pauli},\\
 \mathcal Q_{\rm Pauli}
 &=(\bm\sigma\cdot\bm e_{\rm out})\sigma_z
   (\bm\sigma\cdot\bm e_{\rm in})\notag\\
 &\quad+(\bm\sigma\cdot\bm e_{\rm in})
  (\bm\sigma\cdot\hat{\bm n}')
  (\bm\sigma\cdot\bm e_{\rm out}).
\end{align}
The two entries needed by Eq.~\eqref{eq:spin-average} follow from
\begin{equation}
 \bm e_{\rm out}\cdot\bm e_{\rm in}
 =\frac{\cos\theta+h_i h_f}{2},
 \qquad
 \frac12\Tr(\sigma_z\mathcal K_{\rm Pauli})
 =-2h_f(1-\cos\theta)\,
 \bm e_{\rm out}\cdot\bm e_{\rm in}.
\end{equation}
Consequently, with
\(\M_0^{h_fh_i}
=2q^2\bm e_{\rm out}\cdot\bm e_{\rm in}\),
\begin{equation}
 \M_{\uparrow,\downarrow}^{h_fh_i}
 =\M_0^{h_fh_i}\left[
 1\mp h_f\frac{\epsilon_{\rm r}}{2}(1-\cos\theta)
 +\Order (\epsilon_{\rm r}^2)\right],
\end{equation}
where the upper sign is for \(\uparrow\).  Equation~\eqref{eq:spin-average}
therefore yields
\begin{equation}
 \frac{\M_{\rm spin}^{h_fh_i}}
 {\M_{\rm av}^{h_fh_i}}
 =-\sigma_f\frac{\epsilon_{\rm r}}{2}(1-\cos\theta)
 +\Order (\epsilon_{\rm r}^2).
 \label{eq:fermion-ratio}
\end{equation}
Equation~\eqref{eq:fermion-ratio} is not a classical limit of a
spin-\(\tfrac12\) state.  In the units used here, \(a_z=1/(2m)\), and hence
\begin{equation}
 \left.\frac{1}{\omega a_z}
 \frac{\M_{\rm spin}^{h_fh_i}}{\M_{\rm av}^{h_fh_i}}
 \right|_{j=1/2}
 =-\sigma_f(1-\cos\theta)
 \left[1+\Order (\epsilon_{\rm r})\right].
\end{equation}
The spin-odd projection removes the spin-even recoil sector, and division by
\(\omega a_z\) isolates \(c_{10}\) in \eqref{eq:double-expansion}.  By
Proposition~\ref{prop:spin-promotion}, it gives
\eqref{eq:aligned-spin-dressing} after promotion to a classical \(a\).

\subsection{Analytic contraction of the graviton-pole}

The pole contraction entering the elastic photon channel is, up to a common
coupling and propagator normalization,
\begin{equation}
 \M_{\gamma\gamma}^{(g\mathrm{-pole})}
 \propto\frac1t\,
 T_{1/2}^{\mu\nu}
 P_{\mu\nu,\rho\sigma}T_\gamma^{\rho\sigma},
 \label{eq:fermion-hpole-check}
\end{equation}
with
\begin{equation}
 T_{1/2}^{\mu\nu}
 =\frac14\bar u(p')\!
 \left[\gamma^\mu(p+p')^\nu+\gamma^\nu(p+p')^\mu\right]u(p).
 \label{eq:fermion-stress}
\end{equation}
Write \(P=p+p'\), \(\Delta=p'-p=k-k'\), and
\(\sigma^{\mu\nu}=i[\gamma^\mu,\gamma^\nu]/2\).  The Gordon identity gives
\begin{equation}
 T_{1/2}^{\mu\nu}
 =\frac{1}{4m}\bar u(p')
 \left[
 P^\mu P^\nu+\frac{i}{2}
 \left(P^\mu\sigma^{\nu\rho}+P^\nu\sigma^{\mu\rho}\right)\Delta_\rho
 \right]u(p).
 \label{eq:fermion-stress-gordon}
\end{equation}
The first term is spin even and the second is spin odd.  The Maxwell matrix
element, constructed from the incoming and conjugate outgoing field
strengths, obeys
\(\Delta_\rho T_\gamma^{\rho\sigma}=0\) and
\(\eta_{\rho\sigma}T_\gamma^{\rho\sigma}=0\).  Consequently the trace part
of the de Donder numerator drops out, and no gauge-dependent longitudinal
term contributes.

Use
\begin{equation}
 \Delta^\mu=\omega\left[
 (0,-\sin\theta,0,1-\cos\theta)
 +\Order (\epsilon_{\rm r})\right]
\end{equation}
together with the helicity vectors given above.  Denote by
\(\mathcal H_{\uparrow,\downarrow}^{h_fh_i}\) the numerator in
\eqref{eq:fermion-hpole-check} for source spin along \(\pm\hat z\), and define
\(\mathcal H_{\rm av}=(\mathcal H_\uparrow+\mathcal H_\downarrow)/2\) and
\(\mathcal H_{\rm spin}=(\mathcal H_\uparrow-\mathcal H_\downarrow)/2\).
Direct contraction of \eqref{eq:fermion-stress-gordon} gives
\begin{align}
 \mathcal H_{\rm av}^{h_fh_i}
 &=\mathcal N\left[
   2(1+\cos\theta)\,\delta_{h_fh_i}
   +\Order (\epsilon_{\rm r})\right],\notag\\
 \mathcal H_{\rm spin}^{h_fh_i}
 &=\mathcal N\left[
   -\sigma_f\epsilon_{\rm r}
   (1-\cos^2\theta)\,\delta_{h_fh_i}
   +\Order (\epsilon_{\rm r}^2)\right],
 \label{eq:hpole-even-odd}
\end{align}
where \(\mathcal N\) is the common spin-independent normalization, including
the powers of \(\omega\) suppressed in
\eqref{eq:fermion-hpole-check}.  Hence
\begin{equation}
 \left.
 \frac{\mathcal H_{\rm spin}^{h_fh_i}}
      {\mathcal H_{\rm av}^{h_fh_i}}
 \right|_{h_f=h_i}
 =-\sigma_f a_z\omega(1-\cos\theta),
 \qquad
 \mathcal H_{\rm spin}^{h,-h}=0 .
 \label{eq:hpole-analytic-ratio}
\end{equation}
with \(a_z=1/(2m)\) and \(\sigma_f\equiv\operatorname{sgn}(h_f)\).  This proves analytically that the graviton-pole
has the same aligned factor \(\mathcal D_{fi}\) as the QED kernel.

\subsection{Full-tree factorization}

It remains to show that the same factor appears in the conversion and
gravitational Compton entries, including their local diagrams.  Let
\(\mathcal C_j\) denote the complete gauge-theory Compton kernel for a
massive source of spin \(j\), and let $N_i,N_f$, and $R$ be the kinematic factors
defined in Appendix~\ref{app:spinor-scalar}.  The factor customarily denoted
by \(F\) in the gravitational-Compton factorization literature is, in our
invariants,
\begin{equation}
 F\equiv
 \frac{(p\cdot k)(p\cdot k')}{k\cdot k'}
 =\frac{D_sD_u}{2t}=R .
 \label{eq:F-equals-R}
\end{equation}
Writing
\(\M_{\gamma\gamma,{\rm QED}}^{[j]}=q^2\mathcal C_j\), summing the massive
\(s\)- and
\(u\)-channel graphs, the mixed photon-graviton seagull, and the graviton
pole, and then using the on-shell Dirac equations and the two Ward identities,
gives
\begin{align}
 \M_{\gamma\rightarrow g}^{[1/2]}
 &=\frac{\kappa q}{2}N_f\mathcal C_{1/2},&
 \M_{g\rightarrow\gamma}^{[1/2]}
 &=\frac{\kappa q}{2}N_i\mathcal C_{1/2},\notag\\
 \M_{gg}^{[1/2]}
 &=\frac{\kappa^2}{8}R\,
   \mathcal C_{1/2}\,\mathcal C_0 .
 \label{eq:fermion-full-factorization}
\end{align}
Equation~\eqref{eq:fermion-full-factorization} is an identity for the
\emph{full} minimal tree-level amplitudes, not merely for their pole residues
\cite{Choi:1994ax,Holstein:2006bh}.  Algebraically, the denominators are
first brought to the common denominator \(D_sD_u\); the Dirac equation
removes \(\slashed p-m\) and \(\slashed p'-m\), while
\(k\cdot\varepsilon_i=k'\cdot\varepsilon_f^*=0\) groups the remaining numerator
into \(\mathcal C_{1/2}\).  The seagull and graviton-pole numerators supply
precisely the terms needed to replace a second spin-dependent copy by the
scalar kernel \(\mathcal C_0\).  This is why the last line contains
\(\mathcal C_{1/2}\,\mathcal C_0\), rather than
\(\mathcal C_{1/2}^2\).

The factorization can also be checked channel by channel.  Since
\(R=D_sD_u/(2t)\), its massive-channel residues are
\begin{equation}
 \underset{D_s=0}{\operatorname{Res}}\,
 \M_{gg}^{[1/2]}
 =\frac{\kappa^2D_u}{16t}
 \left(\underset{D_s=0}{\operatorname{Res}}\mathcal C_{1/2}\right)
 \left(\underset{D_s=0}{\operatorname{Res}}\mathcal C_0\right),
 \qquad (s\leftrightarrow u),
\end{equation}
which are the products of the corresponding gravitational three-point
amplitudes.  The graviton-channel residue and the local completion are fixed
by the same Ward identities; a direct diagram sum yields
\eqref{eq:fermion-full-factorization}.  This step assumes minimal coupling:
an anomalous Pauli interaction beyond the Dirac \(g=2\) moment,
curvature--field-strength operators, and tidal operators would add
independent local terms.

Combining \eqref{eq:aligned-spin-dressing},
\eqref{eq:hpole-analytic-ratio}, and
\eqref{eq:fermion-full-factorization} proves
Proposition~\ref{prop:linear-spin-factorization}.

\section{Aligned \texorpdfstring{\(J=2\)}{J=2} data and their limitations}
\label{app:aligned}

Put \( z = \cos\theta\). For \(J=2\), the needed functions include 
\begin{align}
 d^2_{11}&=\frac{(1+z)(2z-1)}2,&
 d^2_{1,-1}&=\frac{(1-z)(2z+1)}2,\\
 d^2_{12}&=2\cth^3\sth,&
 d^2_{1,-2}&=-2\cth\sth^3,\\
 d^2_{22}&=\cth^4,&
 d^2_{2,-2}&=\sth^4 .
 \label{eq:d2}
\end{align}
Direct integration of \eqref{eq:delta-f}--\eqref{eq:delta-f-end} gives
\begin{align}
 \left.\int\dd z\,d^2_{11}\delta f_{\gamma\gamma}^{++}
 \right|_{\rm Th.}
 &=-\frac{a\omega Q^2}{5M},\notag\\
 \left.\int\dd z\,d^2_{1,-1}\delta f_{\gamma\gamma}^{+-}
 \right|_{\rm Th.}
 &=-\frac{a\omega Q^2}{5M},\\
 \int\dd z\,d^2_{12}\delta f_{\gamma g}^{++}
 &=-\frac{2a\omega Q}{5},\notag\\
 \int\dd z\,d^2_{1,-2}\delta f_{\gamma g}^{+-}
 &=-\frac{2a\omega Q}{5},\\
 \int\dd z\,d^2_{22}\delta f_{gg}^{++}
 &=-\frac{4a\omega M}{5},\notag\\
 \int\dd z\,d^2_{2,-2}\delta f_{gg}^{+-}
 &=-\frac{4a\omega M}{5}.
 \label{eq:J2-integrals}
\end{align}
The graviton-pole contribution to the elastic-photon diagonal entry has a vanishing \(J=2\) projection, whereas the spin-dressed Thomson contribution has support only for \(J\leq2\).

For later use, define the \(J=2\) Wigner-\(d\) integrals by
\begin{equation}
I_{XY}^{h_fh_i}\equiv
\int_{-1}^{1}\dd z,d^2_{s_Xh_f,s_Yh_i}(\theta)\delta f_{XY}^{h_fh_i}(z),
\end{equation}
where \(X,Y\in\{\gamma,g\}\) label the final and initial species, respectively, and \(s_\gamma=1\), \(s_g=2\).
If one mechanically applies the spherical RN parity combination separately
to these aligned helicity data, the photon, graviton, and conversion entries
combine, respectively, as
\begin{equation}
I_{\gamma\gamma}^{++}-P I_{\gamma\gamma}^{+-},\qquad I_{gg}^{++}+P I_{gg}^{+-},\qquad P I_{\gamma g}^{++}+I_{\gamma g}^{+-}.
\end{equation}
where the last expression includes the
\(\mathbf U_P=\operatorname{diag}(P,1)\) channel rephasing.  One obtains
\begin{align}
 \label{eq:formal-J2}
 \delta\A_{2,+}^{\rm formal}
 &=\frac{a\omega}{5}
 \begin{pmatrix}0&-4Q\\-4Q&-8M\end{pmatrix},\\
 \delta\A_{2,-}^{\rm formal}
 &=\frac{a\omega}{5}
 \begin{pmatrix}-2Q^2/M&0\\0&0\end{pmatrix}.
 \label{eq:formal-J2-end}
\end{align}
Equations~\eqref{eq:formal-J2}--\eqref{eq:formal-J2-end} also isolate the origin of the two parity
blocks.  
For \(P=-1\) (axial parity), the equal \(++\) and \(+-\) graviton integrals cancel,
as do the two conversion integrals, while the long-range photon term has zero
\(J=2\) projection.  The remaining entry
\(-2a\omega Q^2/(5M)\) is entirely the Thomson term dressed by
\(1-\cos\theta\). For \(P=+1\) (polar parity), the two Thomson integrals cancel instead.
Thus the axial result is generated by the degree-raising leakage
\begin{equation}
 J=1\ \hbox{Thomson polynomial}
 \ \xrightarrow{\ \times(1-\cos\theta)\ }\ J\leq2,
\end{equation}
not by the long-range sector that entered the spinless
\(\ell\geq2\) theorem.
Their trace-free parts do not commute with the RN matrices:
\begin{align}
 \label{eq:formal-comms}
 \comm{\K_2^+}{(\delta\A_{2,+}^{\rm formal})^{\TF}}
 &=-\frac{8a\omega MQ}{5}\Jmat,\\
 \comm{\K_2^-}{(\delta\A_{2,-}^{\rm formal})^{\TF}}
 &=-\frac{8a\omega Q^3}{5M}\Jmat .
 \label{eq:formal-comms-end}
\end{align}
The nonzero commutators show that the unchanged RN projectors fail for this
formal \(J=2\) block; Section~\ref{sec:spin} states why this restricted result
is not a KN separability test.

\bibliography{references}
\bibliographystyle{utphys28mod}

\end{document}